\documentclass[11pt]{article}
\usepackage[square,sort,comma,numbers]{natbib}
\usepackage{amsmath}
\usepackage{amsthm}
\usepackage{amssymb}
\usepackage{mathtools}
\usepackage{algorithm}
\usepackage{subfig}
\usepackage{color}
\usepackage[english]{babel}
\usepackage{graphicx}
\usepackage{natbib}
\usepackage{wrapfig,epsfig}
\usepackage{epstopdf}
\usepackage{url}
\usepackage{graphicx}
\usepackage{color}
\usepackage{algpseudocode}
\usepackage[pdfencoding=auto,psdextra]{hyperref}
\hypersetup{
    colorlinks = true,
    citecolor = blue,
    linkcolor = black
}
\usepackage[T1]{fontenc}
\usepackage{bbm}
\usepackage{dsfont}

\usepackage{tikz}
\usetikzlibrary{arrows}
\usepackage[margin=1in]{geometry} 
\linespread{1}

\usepackage{nccmath}
\DeclareMathOperator*{\J}{\mathcal{J}}
\DeclareMathOperator*{\M}{\mathcal{M}}
\DeclareMathOperator*{\suchthat}{\,|\,}
\DeclareMathOperator*{\OPT}{\operatorname{OPT}}

\usepackage{thmtools}
\usepackage{thm-restate}

\usepackage{tikz}
\usetikzlibrary{arrows}
\usetikzlibrary{decorations.markings}
\usetikzlibrary{arrows}
\usepackage{graphicx}
\usepackage{pgfplots}
\usepgfplotslibrary{fillbetween}

\def\eps{\varepsilon}

\newcommand\blfootnote[1]{%
  \begingroup
  \renewcommand\thefootnote{}\footnote{#1}%
  \addtocounter{footnote}{-1}%
  \endgroup
}

\author{
Dimitris Fotakis\\
\texttt{fotakis@cs.ntua.gr}\\
School of Electrical and Computer Engineering,\\
National Technical University of Athens, Greece\\
\and
Jannik Matuschke\\
\texttt{jannik.matuschke@kuleuven.be}\\
Faculty of Economics and Business\\
KU Leuven, Belgium\\
\and
Orestis Papadigenopoulos\\
\texttt{papadig@cs.utexas.edu}\\
Department of Computer Science\\
The University of Texas at Austin, USA.
}

\newtheorem{theorem}{Theorem}[section]

\newtheorem{proposition}[theorem]{Proposition}

\newtheorem{fact}[theorem]{Fact}
\newtheorem{remark}[theorem]{Remark}

\title{Malleable scheduling beyond identical machines\blfootnote{Part of this work was carried out while the authors participated in the program ``Real-Time Decision Making'' at the Simons Institute for the Theory of Computing, Berkeley, CA. Part of this work was carried out while the second author worked at Technische Universit\"at M\"unchen and was supported by the Alexander von Humboldt Foundation with funds of the German Federal Ministry of Education and Research (BMBF).}}
\date{\today}

\begin{document} 
\pagenumbering{gobble}
\begin{titlepage}
  \maketitle
\begin{abstract}
In malleable job scheduling, jobs can be executed simultaneously on multiple machines with the processing time depending on the number of allocated machines. In this setting, jobs are required to be executed non-preemptively and in unison, in the sense that they occupy, during their execution, the same time interval over all the machines of the allocated set. In this work, we study generalizations of malleable job scheduling inspired by standard scheduling on unrelated machines. Specifically, we introduce a general model of malleable job scheduling, where each machine has a (possibly different) speed for each job, and the processing time of a job $j$ on a set of allocated machines $S$ depends on the total speed of $S$ with respect to $j$. For machines with unrelated speeds, we show that the optimal makespan cannot be approximated within a factor less than $\frac{e}{e-1}$, unless $P = NP$. On the positive side, we present polynomial-time algorithms with approximation ratios $\frac{2e}{e-1}$ for machines with unrelated speeds, $3$ for machines with uniform speeds, and $7/3$ for restricted assignments on identical machines. Our algorithms are based on deterministic LP rounding. They result in sparse schedules, in the sense that each machine shares at most one job with other machines. We also prove lower bounds on the integrality gap of $1+\varphi$ for unrelated speeds ($\varphi$ is the golden ratio) and $2$ for uniform speeds and restricted assignments. To indicate the generality of our approach, we show that it also yields constant factor approximation algorithms for a variant where we determine the effective speed of a set of allocated machines based on the $L_p$ norm of their speeds. 
\end{abstract}
\end{titlepage}

\clearpage
\pagenumbering{arabic}

\section{Introduction}
\label{sec:intro}

Since the late 60s, various models have been proposed by researchers \cite{GG75,G69} in order to capture the real-world aspects and particularities of multiprocessor task scheduling systems, i.e., large collections of identical processors able to process tasks in parallel. High performance computing, parallel architectures, and cloud services are typical applications that motivate the study of multiprocessor scheduling, both theoretical and practical. An influential model is Rayward-Smith's unit execution time and unit communication time (UET-UCT) model~\cite{R87}, where each parallel job is partitioned into a set of tasks of unit execution time and these tasks are subject to precedence constraints modeled by a task graph.
The UET-UCT model and its generalizations have been widely studied and a large number of (approximation) algorithms and complexity results have been proposed \cite{HM01,PY90}. 

However, the UET-UCT model mostly focuses on task scheduling and sequencing, and does not account for the amount of resources allocated to each job, thus failing to capture an important aspect of real-world parallel systems. Specifically, in the UET-UCT model, the level of granularity of a job (that is, the number of smaller tasks that a job is partitioned into) is decided a priori and 
is given as part of the input. 
However, it is common ground in the field of parallel processing that the unconditional allocation of resources for the execution of a job may jeopardize the overall efficiency of a multiprocessor system. A theoretical explanation is provided by Amdahl's law~\cite{A07}, which suggests that the speedup of a job's execution can be estimated by the formula $\frac{1}{(1-p) + \frac{p}{s}}$, where $p$ is the fraction of the job that can be parallelized and $s$ is the speedup due to parallelization (i.e., $s$ can be thought of as the number of processors). 

\smallskip\noindent{\bf Malleable Scheduling.}
An interesting alternative to the UET-UCT model is that of {\em malleable}\footnote{Malleable scheduling also appears as {\em moldable}, while sometimes the two terms refer to slightly different models.} job scheduling~\cite{DL89,TWY92}. In this setting, a set $\J$ of jobs is scheduled on a set $\M$ of parallel machines, while every job can be processed by more than one machine at the same time. In order to quantify the effect of parallelization, the processing time of a job $j \in \J$ is determined by a function $f_j : \mathbb{N} \rightarrow \mathbb{R}_+$ depending on the number of allocated machines.\footnote{We denote by $\mathbb{R}_+$ (resp. $\mathbb{Z}_+$) the set of non-negative reals (resp. integers).}  Moreover, every job must be executed {\em non-preemptively} and in {\em unison}, i.e. having the same starting and completion time on each of the allocated machines. Thus, if a job $j$ is assigned to a set of machines $S$ starting at time $\tau$, all machines in $S$ are occupied with job $j$ during the interval $[\tau, \tau+f_j(|S|)]$. It is commonly assumed that the processing time function of a job exhibits two useful and well-motivated properties: 
\begin{itemize}
\item For every job $j \in \J$, the processing time $f_j(s)$ is {\em non-increasing} in the number of machines.\footnote{This property holds w.l.o.g., as the system always has the choice not to use some of the allocated machines.}
\item The total {\em work} of the execution of a job $j$ on $s$ machines, that is the product $s \cdot f_j(s)$, is {\em non-decreasing} in the number of machines.
\end{itemize}
The latter property, known as {\em monotonicity} of a malleable job, is justified by Brent's law~\cite{B74}: One cannot expect superlinear speedup by increasing the level of parallelism. 
A great deal of theoretical results have been published on scheduling malleable jobs according to the above model (and its variants) for the objective of minimizing the {\em makespan}, i.e., the completion time of the last finishing job, or other standard objectives (see e.g., \cite{DMT04} and the references therein).
 
Although malleable job scheduling represents a va\-liant attempt to capture real-world aspects of massively parallel processing, the latter exhibits even more complicated characteristics. Machine heterogeneity, data locality and hardware interconnection are just a few aspects of real-life systems that make the generalization of the aforementioned model necessary. In modern multiprocessor systems, machines are not all identical and the processing time of a job not only depends on the quantity, but also on the quality of the set of allocated machines. Indeed, different physical machines may have different capabilities in terms of faster CPUs or more efficient cache hierarchies. Moreover, the above heterogeneity may be job-dependent, in the sense that a specific machine may be faster when executing a certain type of jobs than another (e.g. memory- vs arithmetic-intensive applications \cite{PH13}). Finally, the execution of a job on specific combinations of machines may also yield additional benefit (e.g., machines that are local in terms of memory hierarchy). 

\smallskip\noindent{\bf Our Model: Malleable Scheduling on Unrelated Machines.}
Quite surprisingly, no results exist on sche\-duling malleable jobs beyond the case of identical machines, to the best of our knowledge, despite the significant theoretical and practical interest in the model. In this work, we extend the model of malleable job scheduling to capture more delicate aspects of parallel job scheduling. In this direction, while we still require our jobs to be executed non-preemptively and in unison, the processing time of a job $j \in \J$ becomes a set function $f_j(S)$, where $S \subseteq \M$ is the set of allocated machines. We require that processing times are given by a {\em non-increasing} function, in the set function context, while additional assumptions on the scalability of $f_j$ are made, in order to capture the {\em diminishing utility} property implied by Brent's law. 

These assumptions naturally lead to a generalized malleable job setting, where processing times are given by non-increasing supermodular set functions $f_j(S)$, accessed by value queries. We show that makespan minimization in this general setting 
is inapproximable within $\mathcal{O}(|\J|^{1-\eps})$ factors (unless $P = NP$, see Section~\ref{sec:supermodular}). The general message of the proof is that unless we make some relatively strong assumptions on processing times (in the form e.g., of a relatively smooth gradual decrease in the processing time, as more machines are allocated), malleable job scheduling (even with monotone supermodular processing times) can encode combinatorial problems as hard as graph coloring. 

Thus, inspired by (standard non-malleable) scheduling models on uniformly related and unrelated machines, we introduce the notion of {\em speed-implementable} processing time functions. For each machine $i$ and each job $j$ there is a {\em speed} $s_{i,j} \in \mathbb{Z}_{+}$ that quantifies the contribution of machine $i$ to the execution of job $j$, if $i$ is included in the set allocated to $j$. For most of this work, we assume that the total speed of an allocated set is given by an additive function $\sigma_j(S) = \sum_{i \in S} s_{i,j}$ (but see also Section~\ref{sec:effective-speed}, where we discuss more general speed functions based on $L_p$-norms). 
A function is speed-implementable if we can write $f_j(S) = f_j(\sigma_j(S))$ for some function $f_j : \mathbb{R}_{+} \rightarrow \mathbb{R}_{+}$.\footnote{For convenience, we use the identifier $f_j$ for both functions. Since their arguments come from disjoint domains, it is always clear from the context which one is meant.}  Again, we assume oracle access to the processing time functions.

The notion of speed-implementable processing times allows us to quantify the fundamental assumptions of {\em monotonicity} and {\em diminishing utility} in a clean and natural way. More specifically, we make the following two assumptions on speed-implementable functions:
\begin{enumerate}
\item {\em Non-increasing processing time:} For every job $j \in \J$, the processing time $f_j(s)$ is non-increasing in the total allocated speed $s \in \mathbb{R}_+$. 
\item {\em Non-decreasing work:} For every job $j \in \J$, the work $f_j(s) \cdot s$ is non-decreasing in the total allocated speed $s \in \mathbb{R}_+$. 
\end{enumerate}
The first assumption ensures that allocating more speed cannot increase the processing time. The second assumption is justified by Brent's law, when the increase in speed coincides with an increase in the physical number of machines, or by similar arguments for the increase of the total speed of a single physical machine (e.g., memory access, I/O bottleneck \cite{PH13} etc.). We remark that speed-implementable functions with non-in\-creasing processing times and non-decreasing work do not need to be convex, and thus, do not belong to the class of supermodular functions. 

In order to avoid unnecessary technicalities, we additionally assume that for any job $j \in \J$, $f_j(0) = +\infty$, namely, no job can be executed on a set of machines of zero cumulative speed.

In this work, we focus on the objective of minimizing the \emph{makespan} $C_{max} = \max_{j \in \J} C_j$, where $C_j$ the completion time of job~$j$. 
We refer to this setting as the problem of {\em scheduling malleable jobs on unrelated machines}. To further justify this term, we present a pseudopolynomial transformation of standard scheduling on unrelated machines to malleable scheduling with speed-implementable processing times (see Section~\ref{sec:hardness}). The reduction can be rendered polynomial by standard techniques, preserving approximation factors with a loss of $1+\varepsilon$.

\subsection{Related Work}
The problem of malleable job scheduling on identical machines has been studied thoroughly for more than three decades.
For the case of non-monotonic jobs, i.e., jobs that do not satisfy the monotonic work condition, Du and Leung~\cite{DL89} show that the problem is strongly NP-hard for more than $5$ machines, while in terms of approximation, Turek, Wolf and Yu~\cite{TWY92} provided the first 2-approximation algorithm for the same version of the problem. 
Jansen and Porkolab~\cite{JP02} devised a PTAS for instances with a constant number of machines, which was later extended by Jansen and Th\"ole~\cite{JT10} to a PTAS for the case that the number of machines is polynomial in the number of jobs.

For the case of monotonic jobs, Mouni{\'{e}}, Rapine and Trystram~\cite{MRT07} propose a $\frac{3}{2}$-approximation algorithm, improving on the $\sqrt{3}$-approximation provided by the same authors~\cite{MRT99}. 
Recently, Jansen and Land~\cite{JL17} gave an FPTAS for the case that $|\M| \geq 8 |\J|/\varepsilon$. 
Together with the approximation scheme for polynomial number of machines in~\cite{JL17}, this implies a PTAS for scheduling monotonic malleable jobs on identical machines.

Several papers also consider the problem of scheduling malleable jobs with preemption and/or under precedence constraints~\cite{BKMTW06, JZ06, MP14}.
An interesting alternative approach to the general problem is that of Srinivasa Prasanna, and Musicus~\cite{SM91}, who consider a continuous version of malleable tasks and develop an exact algorithm based on optimal control theory under certain assumptions on the processing time functions.
While the problem of malleable scheduling on identical machines is very well understood, this is not true for malleable extensions of other standard scheduling models, such as unrelated machines or the restricted assignment model.

A scheduling model similar to malleable tasks is that of \emph{splittable jobs}. In this regime, jobs can be split arbitrarily and the resulting parts can be distributed arbitrarily on different machines. 
For each pair of job $j$ and machine $i$, there is a setup time $s_{ij}$ and a processing time $p_{ij}$.
If a fraction $x_{ij} \in (0, 1]$ of job $j$ is to be scheduled on machine $i$, the load that is incurred on the machine is $s_{ij} + p_{ij}x_{ij}$.
Correa et al.~\cite{CM15} provide an $(1 + \varphi)$-approximation algorithm for this setting (where $\varphi$ is the golden ratio), which is based on an adaptation of the classic LP rounding result by Lenstra, Shmoys, and Tardos~\cite{LST90} for the traditional unrelated machine scheduling problem.
We remark that the generalized malleable setting considered in this paper also induces a natural generalization of the splittable setting beyond setup times, when dropping the requirement that jobs need to be executed in unison.
As in~\cite{CM15}, we provide a rounding framework based on a variant of the assignment LP from~\cite{LST90}. However, the fact that processing times are only given implicitly as functions in our setting makes it necessary to carefully choose the coefficients of the assignment LP, in order to ensure a constant integrality gap. Furthermore, because jobs have to be executed in unison, we employ a more sophisticated rounding scheme in order to better utilize free capacity on different machines.

\subsection{Contribution and Techniques}
\label{sec:contrib}

At the conceptual level, we introduce the notion of malleable jobs with speed-implementable processing times. Hence, we generalize the standard and well-studied setting of malleable job scheduling, in a direct analogy to fundamental models in scheduling theory (e.g., scheduling on \emph{uniformly related} and \emph{unrelated} machines). This new and much richer model gives rise to a large family of unexplored packing problems that may be of independent interest. 

From a technical viewpoint, we investigate the computational complexity and the approximability of this new setting. To the best of our understanding, standard techniques used for makespan minimization in the setting of malleable job scheduling on identical machines, such as the two-shelve approach (as used in \cite{MRT07,TWY92}) and area charging arguments, fail to yield any reasonable approximation guarantees in our more general setting. 
This intuition is supported by the following hardness of approximation result (see Section~\ref{sec:hardness} for the proof).
\begin{restatable}{theorem}{restateHardness}\label{unrelated:apxhardness}
For any $\epsilon > 0$, there is no $(\frac{e}{e-1} - \epsilon)$-approximation algorithm for the problem of scheduling malleable jobs on unrelated machines, unless $P=NP$.
\end{restatable}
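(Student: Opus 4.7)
The plan is to reduce from the Max-$k$-Cover problem, leveraging Feige's celebrated hardness result: for every $\eta > 0$, it is NP-hard to distinguish, given a universe $U$ of size $n$ and sets $S_1, \dots, S_m \subseteq U$, between a YES instance in which some $k$ sets cover all of $U$ and a NO instance in which no $k$ sets cover more than $(1-1/e+\eta)\,n$ elements. The goal is to map this coverage gap faithfully into a makespan gap of $\frac{e}{e-1}-\epsilon$, so that a sub-$\frac{e}{e-1}$ approximation for our scheduling problem would solve Feige's decision problem.

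Given such an instance, my construction introduces one machine $M_i$ per set $S_i$ and one ``element-job'' $j_e$ per element $e \in U$, with $s_{M_i,j_e}=1$ if $e \in S_i$ and $s_{M_i,j_e}=0$ otherwise. The functions $f_{j_e}$ are chosen so that $j_e$ completes in a single unit of time once at least one machine $M_i$ with $e \in S_i$ is allocated to it, while an element-job that does not receive any covering machine requires prohibitively long processing. I additionally introduce auxiliary ``blocking'' jobs whose speed profile forces all but $k$ of the set-machines to be occupied during any interval of length close to $1$, so that element-jobs effectively share the remaining $k$ machines. For the YES case, I fix $k$ sets $S_{i_1},\dots,S_{i_k}$ whose union is $U$, run the blocking jobs on $\mathcal{M}\setminus\{M_{i_1},\dots,M_{i_k}\}$, and assign each $j_e$ to one of $M_{i_1},\dots,M_{i_k}$ containing $e$; careful balancing yields a schedule of makespan~$1$.

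For the NO case, I argue that in any schedule with makespan strictly less than $\frac{e}{e-1}-\epsilon$, the blocking jobs confine the ``useful'' machine pool for element-jobs in each unit time window to a set of at most $k$ machines. By the Feige guarantee, this pool covers at most $(1-1/e+\eta)\,n$ element-jobs; the remaining $\frac{n}{e}(1-o(1))$ jobs must be processed in the residual time of at most $\frac{1}{e-1}-\epsilon$, which I rule out by a work/averaging argument combined with the non-decreasing-work assumption, giving the required contradiction. If a single window is insufficient to get the exact $\frac{e}{e-1}$ factor, I amplify by replicating the construction across multiple time layers and invoking the same Feige bound in each layer to sharpen the gap to $\frac{e}{e-1}-\epsilon$.

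The main obstacle is the calibration of the functions $f_{j_e}$ and the auxiliary blocking jobs so that the reduction respects both structural assumptions on speed-implementable processing times, namely non-increasing $f_j(s)$ and non-decreasing $s \cdot f_j(s)$, while still producing a \emph{sharp} threshold at $\sigma = 1$ that rules out schedules exploiting intermediate-speed slack. A natural choice such as $f_{j_e}(\sigma)=1/\max(\sigma,1)$ has constant work and is monotone, but one has to verify that no clever schedule circumvents the Feige bottleneck by fractionally splitting element-jobs across many partially-covering machines; handling this likely requires a second family of auxiliary jobs that penalize diffuse allocations, and verifying that the resulting instance still satisfies the malleability axioms is the most delicate part of the argument.
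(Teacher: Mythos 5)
Your reduction starts from the right place (Max-$k$-Cover with Feige's $(1-1/e)$ gap, which is exactly what the paper uses), but the construction is dualized relative to the one that actually works, and as written it does not close. The paper makes the \emph{jobs} correspond to sets and the \emph{machines} to elements: there are $n$ set-jobs with $f_j(S)=\max\{1/\sigma_j(S),1\}$, a pool of $n-k$ unit-speed common machines, and one element-machine of speed $k/m$ per element, with nonzero speed only for the sets containing that element. Then in any schedule of makespan below $2$ each machine serves one job, so exactly $k$ set-jobs must be fed entirely by element-machines; the total speed those $k$ jobs can collect is $(k/m)\cdot|\bigcup S_i|$, and Feige's bound caps this at $(1-1/e+\epsilon')k$, forcing some job to have speed at most $1-1/e+\epsilon'$ and hence processing time at least roughly $\tfrac{e}{e-1}$. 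The coverage gap converts \emph{directly} into a speed deficit for a single job, which is what produces the clean $\tfrac{e}{e-1}$ threshold.

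Your inverted construction (one machine per set, one job per element, plus blocking jobs) has two concrete failures. First, in the YES case a chosen machine $M_{i_\ell}$ must \emph{serially} process all unit-time element-jobs of $S_{i_\ell}$, giving makespan $|S_{i_\ell}|$, not $1$; rescaling the element-jobs to length $k/n$ repairs the YES case but then collapses the NO case. Second, in the NO case the blocking jobs only \emph{delay} the other $m-k$ machines rather than remove them: an uncovered element-job can simply start on its (blocked) covering machine once the blocker finishes, and the residual capacity $(m-k)\cdot\bigl(\tfrac{1}{e-1}\bigr)$ spread over the blocked machines vastly exceeds the total work $\approx k/e$ of the $n/e$ uncovered jobs, so no averaging argument yields a contradiction when $m\gg k$. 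Your own closing paragraph flags the calibration of $f_{j_e}$ and the diffuse-allocation issue as unresolved, and indeed the proposed $f_{j_e}(\sigma)=1/\max(\sigma,1)$ has the wrong shape (it is constant equal to $1$ for $\sigma\le 1$, so under-covered jobs are not penalized at all). The missing idea is not extra auxiliary gadgets but the swap of roles: once jobs are sets, the machine-disjointness forced by $f_j\ge 1$ and the additive speed $\sigma_j(S)$ make Feige's coverage bound an exact bound on achievable speed, and no blocking jobs or amplification layers are needed.
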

Note that the lower bound of $\frac{e}{e-1}$ is strictly larger than the currently best known lower bound of $1.5$ for classic (non-malleable) scheduling on unrelated machines.

Our positive results are based on a linear programming relaxation, denoted by [LP($C$)] and described in Section~\ref{sec:overview}. This LP resembles the assignment LP for the standard setting of non-malleable scheduling~\cite{LST90}. However, in order to obtain a constant integrality gap we distinguish between ``small'' jobs that can be processed on a single machine (within a given target makespan), and ``large'' jobs that have to be processed on multiple machines. 
For the large jobs, we carefully estimate their contribution to the load of their allocated machines. Specifically, we introduce the notion of \emph{critical speed} and use the critical speed to define the load coefficients incurred by large jobs on machines in the LP relaxation by proportionally distributing the work volume according to machine speeds. 
For the rounding, we exploit the sparsity of our relaxation's extreme points (as in \cite{LST90}) and generalize the approach of \cite{CM15}, in order to carefully distinguish between jobs assigned to a single machine and jobs shared by multiple machines.
\begin{restatable}{theorem}{restateUnrelated} \label{thm:filtering-approx}
There exists a polynomial-time $\frac{2 e}{e-1}$-appro\-xi\-mation algorithm for the problem of scheduling malleable jobs on unrelated machines.
\end{restatable}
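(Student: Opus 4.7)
The plan is to round a fractional solution of the LP [LP($C$)] described in Section~\ref{sec:overview} into an integral schedule of makespan at most $2C$, while simultaneously showing that the LP is feasible at $C = \frac{e}{e-1}\OPT$; the target $C$ is guessed by binary search over a polynomially-sized candidate set. Recall that for a guess $C$, the critical speed is $s^*_j := \inf\{s \ge 0 : f_j(s) \le C\}$ and a pair $(i,j)$ is \emph{small} if $s_{i,j} \ge s^*_j$ (machine $i$ alone suffices for job $j$ within $C$) and \emph{large} otherwise. The coverage constraint demands $\sum_{i\text{ small}} x_{i,j} + \sum_{i\text{ large}} x_{i,j}\, s_{i,j}/s^*_j \ge 1$, while the load constraint on each machine uses $f_j(s_{i,j})$ as the coefficient on small pairs and the proportional-work coefficient $C \cdot s_{i,j}/s^*_j$ on large pairs.

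The first step is to certify feasibility of [LP($C$)] at $C = \frac{e}{e-1}\OPT$ by constructing a fractional assignment from the optimal integral schedule. This is where the $\frac{e}{e-1}$ factor enters: the LP's linear load coefficient for large pairs is a surrogate for the true (potentially nonlinear, by non-decreasing work and non-increasing processing time) contribution of machine $i$ to job $j$'s execution. Calibrating this surrogate so that OPT induces a bounded-load fractional solution is the main technical obstacle, and I expect the argument to hinge on a concavity/averaging comparison between the true work function $s \mapsto s \cdot f_j(s)$ and its linear interpolant through $s^*_j$, with the resulting slack matching the $\frac{e}{e-1}$ inapproximability from Theorem~\ref{unrelated:apxhardness}.

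The second step is to pass to an extreme point $x^*$ of [LP($C$)]. By the standard sparsity argument of Lenstra--Shmoys--Tardos~\cite{LST90}, $x^*$ has at most $|\J|+|\M|$ positive entries, so the bipartite support graph $G=(\M\cup\J, E)$ with $E = \{(i,j) : x^*_{i,j} > 0\}$ is a pseudoforest (each component being a tree with at most one extra edge). I partition $\J$ into \emph{single-machine} jobs, which have a unique positive entry and are assigned integrally to that machine, and \emph{shared} jobs, which have multiple positive entries. Adapting the rounding of Correa--Marchetti-Spaccamela~\cite{CM15} for splittable scheduling to the non-preemptive, in-unison malleable setting, I use the pseudoforest structure (an orientation / matching on each component) to simultaneously (i) integrally place every single-machine job on its unique machine, and (ii) allocate each shared job $j$ to the full set $S_j$ of machines it was fractionally assigned to, while ensuring that every machine hosts at most one shared job.

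Finally, I bound the makespan on each machine $i$. The LP load constraint applied to $x^*$ gives $\sum_{j\text{ single on }i} f_j(s_{i,j}) \le C$, since for single-machine jobs $x^*_{i,j}=1$ and the small-pair LP coefficient equals the true processing time (note $f_j(s_{i,j}) \le C$ by non-increasing $f_j$ and smallness of $(i,j)$). The single shared job $j$ additionally assigned to $i$ contributes processing time $f_j(\sigma_j(S_j))$; by construction, the LP coverage constraint implies $\sigma_j(S_j) \ge s^*_j$, whence $f_j(\sigma_j(S_j)) \le C$ by definition of the critical speed. Summing these two contributions, each machine has load at most $2C = \frac{2e}{e-1}\OPT$, establishing the claimed approximation ratio.
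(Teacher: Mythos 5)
Your proposal has two genuine gaps, and together they mean the argument as written does not establish the theorem. First, you place the $\frac{e}{e-1}$ loss in the LP feasibility step and leave that step as a conjecture (``I expect the argument to hinge on a concavity/averaging comparison\dots''). In fact [LP($C$)] is feasible already at $C=\OPT$ with no loss: setting $x_{i,j}=s_{i,j}/\sigma_j(S_j)$ for the optimal allocation $S_j$ makes each job's load contribution on machine $i$ at most $f_j(S_j)$ by Fact~\ref{fact:technical} (Proposition~\ref{speed:lowerbound}). The constant $\frac{2e}{e-1}$ arises entirely in the rounding, not in the relaxation.

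Second, and more seriously, your rounding cannot deliver makespan $2C$. You want to assign every shared job to its \emph{full} fractional support $S_j$ while guaranteeing that each machine hosts at most one shared job. These two requirements are incompatible: in a pseudotree a single machine node can be adjacent to many job nodes each of which also touches other machines (a star with center $i$ and jobs $j_1,\dots,j_k$ each having one extra leaf machine), so assigning each $j_t$ to its full support places $k$ shared jobs on $i$. The in-degree-one orientation only guarantees that each machine is the \emph{child} of at most one job, which is why the paper assigns a shared job only to its children $T(j)$, \emph{excluding} the parent. That exclusion can discard up to half of the job's fractional speed, which is why one only gets $f_j(T(j))\leq 2C$ (Proposition~\ref{speed:children}) rather than $\leq C$, and why parent-assigned jobs (those with $x_{p(j),j}\geq \frac12$) cost a factor $2$ on their machine's load (Proposition~\ref{speed:parent}). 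A $2C$ rounding of [LP($C$)] is in fact ruled out by the paper's own integrality-gap lower bound of $1+\varphi\approx 2.618$ (Appendix~\ref{appendix:unrelated:integralitygap}). What your proposal is missing is precisely the mechanism that produces the stated constant: the threshold-$\frac12$ parent/children dichotomy combined with the filtering step that assigns $j$ to $S_j(\theta_j)=\{i\in T(j): 1-\ell_i/C\geq\theta_j\}$ and the averaging argument $\int_{\alpha-2}^{\alpha}\frac{d\lambda}{\lambda}=\ln\frac{\alpha}{\alpha-2}=1$ that pins down $\alpha=\frac{2e}{e-1}$ (Proposition~\ref{speed:filtering}).
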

An interesting corollary is that for malleable job scheduling on unrelated machines, there always exists an approximate solution where each machine shares at most one job with some other machines.

In addition, we consider the interesting special case of {\em restricted assignment} machines, namely, the case where each job is associated with subset $M_j \subseteq \M$ of machines such that $s_{i,j} = 1$ for all $j \in \J$ and $i \in M_j$ and $s_{i,j} = 0$ otherwise. Finally, we consider the case of {\em uniform speeds}, where $s_{i,j} = s_{i}$ for all $i \in \M$ and $j \in \J$. For the above special cases of our problem, we are able to get improved approximation guarantees by exploiting the special structure of the processing time functions, as summarized in the following two theorems.

\begin{restatable}{theorem}{restateRestricted}\label{rounding:restricted}
There exists a polynomial-time $\frac{7}{3}$-approxi\-mation algorithm for the problem of scheduling malleable jobs on restricted identical machines.
\end{restatable}

\begin{restatable}{theorem}{restateUniform}\label{uniform:thm}
There exists a polynomial-time $3$-approxi\-mation algorithm for the problem of scheduling malleable jobs on uniform machines.
\end{restatable}

All our approximation results imply corresponding upper bounds on the integrality gap of the linear programming relaxation [LP($C$)].
Based on an adaptation of a construction in~\cite{CM15}, we show a lower bound of $1+\varphi \approx 2.618$ on the integrality gap of [LP($C$)] for malleable job scheduling on unrelated machines, where $\varphi$ is the golden ratio (see Section~\ref{sec:hardness}).
For the cases of restricted assignment and uniformly related machines, respectively, we obtain an integrality gap of $2$.

Moreover, we extend our model and approach in the following direction. We consider a setting where the \emph{effective speed} according to which a set $S$ of allocated machines processing a job $j$ is given by the $L_p$-norm $\sigma^{(p)}_j(S) = \left( \sum_{i \in S} (s_{i,j})^p  \right)^{1/p}$ of the corresponding speed vector. In practical settings, we tend to prefer assignments to relatively small sets of physical machines, so as to avoid delays related to communication, memory access, and I/O (see e.g., \cite{PH13}). By replacing the total speed (i.e., the $L_1$-norm) with the $L_p$-norm of the speed vector for some $p \geq 1$, we discount the contribution of additional machines (especially of smaller speeds) towards processing a job $j$. Thus, as $p$ increases, we give stronger preference to \emph{sparse} schedules, where the number of jobs shared between different machines (and the number of machines sharing a job) are kept small. Interestingly, our general approach is robust to this generalization and it results in constant approximation factors for any $p \geq 1$. 
Asymptotically, the approximation factor is bounded by $\frac{p}{p-\ln p} + \sqrt[p]{\frac{p}{\ln p}}$ and our algorithm smoothly converges to the classic 2-approximation algorithm for unrelated machine scheduling~\cite{LST90} as $p$ tends to infinity (note that for the $L_\infty$-norm, our setting is identical to standard scheduling on unrelated machines).
These results are discussed in Section~\ref{sec:effective-speed}.

Trying to generalize malleable job scheduling beyond the simple setting of identical machines, as much as possible, we believe that our setting with speed-implementable processing times lies on the frontier of the constant-factor approximability regime. We show a strong inapproximability lower bound of $\mathcal{O}(| \J |^{1-\eps})$ for the (far more general) setting where the processing times are given by a non-increasing supermodular set functions. These results are discussed in Section~\ref{sec:supermodular}. An interesting open question is to characterize the class of processing time functions for which malleable job scheduling admits constant factor (and/or logarithmic) approximation guarantees. 
\section{The general rounding framework} \label{sec:overview}
In this section, we provide a high-level description of our algorithm. We construct a polynomial-time {\em $\rho$-relaxed decision procedure} for malleable job scheduling problems. This procedure takes as input an instance of the problem as well as a target makespan $C$ and either asserts correctly that there is no feasible schedule of makespan at most $C$, or returns a feasible schedule of makespan at most $\rho C$.
It is well-known that a $\rho$-relaxed decision procedure can be transformed into a polynomial-time $\rho$-approximation algorithm~\cite{HS85} provided that one can compute proper lower and upper bounds to the optimal value of size polynomial in the size of the input.

Given a target makespan $C$, let $$\gamma_j(C) := \min \{q \in \mathbb{Z}_{+} \suchthat f_j(q) \leq C \}$$ be the {\em critical speed} of job $j \in \J$. Moreover, we define for every $i \in \M$ the sets $J^+_i(C) := \{j \suchthat f(s_{i,j})\leq C\}$ and $J^-_i(C) := J \backslash J^+_i(C)$ to be the set of jobs that can or cannot be processed by $i$ alone within time $C$, respectively. 
Note that $\gamma_j(C)$ can be computed in polynomial-time by performing binary search when given oracle access to $f_j$.
When $C$ is clear from the context, we use the short-hand notation $\gamma_j$, $J^+_i$, and $J^-_i$ instead.
The following technical fact is equivalent to the non-decreasing work property and is used throughout the proofs of this paper:
\begin{fact}\label{fact:technical}
Let $f$ be a speed-implementable processing time function satisfying the properties of our problem. Then for every speed $q \in \mathbb{R}_{+}$ we have that:
\begin{enumerate}
\item $f(\alpha q) \leq \frac{1}{\alpha} f(q)$ for every $\alpha \in (0,1)$, and
\item $f(q') \leq \frac{q}{q'} f(q)$ for every $q' \leq q$.
\end{enumerate}
\end{fact}
\begin{proof}
For the first inequality, since $\alpha \in (0,1)$, it immediately follows that $\alpha q \leq q$. By the non-decreasing work property of $f$, we have that $\alpha q f(\alpha q) \leq q f(q)$, which implies that $f(\alpha q) \leq \frac{1}{\alpha} f(q)$. The second inequality is just an application of the first by setting $\alpha = \frac{q'}{q} \in (0,1)$.
\end{proof}
For every target makespan $C$, let $r_{i,j} := \max \{s_{i,j}, \gamma_j(C)\}$. The following feasibility LP is the starting point of the relaxed decision procedures we construct in this work:
\begin{ceqn}
\begin{align}
\text{[LP(C)]:}~ &\sum_{i \in \M} x_{i,j}  = 1~, \forall j \in \J \label{lp-assign}\\ 
&\sum_{j \in \J} \frac{f_j(r_{i,j})r_{i,j}}{s_{i,j}} x_{i,j} \leq C ~, \forall i \in \M \label{lp-makespan}\\
&x_{i,j} \geq 0 ~,\forall j \in \J, i \in \M. \label{lp-posit}
\end{align}
\end{ceqn}

In the above LP, each variable $x_{i,j}$ can be thought of as the fraction of job $j$ that is assigned to machine $i$. The equality constraints \eqref{lp-assign} ensure that each job is fully assigned to a subset of machines, while constraints \eqref{lp-makespan} impose an upper bound to the load of every machine.
Notice that for any job $j$ and machine $i$ such that $j \in J^+_i(C)$, it has to be that $s_{i,j} \geq \gamma_{j}$ and, thus, for the corresponding coefficient of constraints \eqref{lp-makespan}, we have $\frac{f_j(r_{i,j}) r_{i,j}}{s_{i,j}} = f_j(s_{i,j})$. Similarly, for any machine $i$ and job $j$ such that $j \in J^-_i$, we have that $s_{i,j} \leq \gamma_{j}$ and, thus, $\frac{f_j(r_{i,j}) r_{i,j}}{s_{i,j}} = \frac{f_j(\gamma_{j}) \gamma_{j}}{s_{i,j}}$.

As we prove in the following proposition, the above formulation is feasible for any $C$ that is greater than the optimal makespan.
\begin{proposition}\label{speed:lowerbound}
For every $C \geq \OPT$, where $\OPT$ is the makespan of an optimal schedule, [LP(C)] has a feasible solution.
\end{proposition}
\begin{proof}
Fix any feasible schedule of makespan $\OPT$ and let $S_j \subseteq \M$ be the set of machines allocated to a job $j$ in that schedule. 
For every $i \in \M$,$j \in \J$ set $x_{i,j} = \frac{s_{i,j}}{\sigma_j(S_j)}$ if $i \in S_j$ and $x_{i,j} = 0$, otherwise. We show that $x$ is a feasible solution to $[LP(C)]$.
Indeed, constraints \eqref{lp-assign} are satisfied since $\sum_{i \in \M} x_{i,j} = \sum_{i \in S_j} \frac{s_{i,j}}{\sigma_j(S_j)} = 1$ for all $j \in \J$. 
For verifying that constraints \eqref{lp-makespan} are satisfied, let $j \in \J$ and $i \in S_j$. By definition of $S_j$, it has to be that $\sigma_j(S_j) \geq s_{i,j}$ and $\sigma_j(S_j) \geq \gamma_{j}$, which implies that $\sigma_j(S_j) \geq r_{i,j}$. Therefore, by replacing $x_{i,j} = \frac{s_{i,j}}{\sigma_j(S_j)}$, the corresponding coefficient of \eqref{lp-makespan} becomes: $\frac{f_j(r_{i,j}) r_{i,j}}{s_{i,j}} \frac{s_{i,j}}{\sigma_j(S_j)} = \frac{f_j(r_{i,j}) r_{i,j}}{\sigma_j(S_j)} \leq f_j(S_j)$, where the last inequality follows by Fact \ref{fact:technical} and the fact that $\sigma_j(S_j) \geq r_{i,j}$. Using the above analysis, we can see that for any $i \in \M$ we have
$$\sum_{j \in \J} \frac{f_j(r_{i,j}) r_{i,j}}{s_{i,j}} x_{i,j} \leq \sum_{j \in \J|~ i \in S_j } f_j(S_j)\leq \OPT \leq C .$$ 
\end{proof}

Assuming that $C \geq \OPT$, let $x$ be an extreme point solution to [LP($C$)]. We create the {\em assignment graph} $\mathcal{G}(x)$ with nodes $V := \J \cup \M$ and edges $E := \{\{i,j\} \in \M \times \J \suchthat x_{i,j}>0\}$, i.e., one edge for each machine-job pair in the support of the LP solution. Notice that $\mathcal{G}(x)$ is bipartite by definition. 
Furthermore, since [LP($C$)] is structurally identical to the LP of unrelated machine scheduling~\cite{LST90}, the choice of $x$ as an extreme point guarantees the following sparsity property:

\begin{proposition}\cite{LST90}
For every extreme point solution $x$ of [LP(C)], each connected component of $\mathcal{G}(x)$ contains at most one cycle.
\end{proposition}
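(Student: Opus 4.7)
The plan is to adapt the classical extreme-point argument of Lenstra, Shmoys, and Tardos~\cite{LST90}, since the constraint structure of [LP($C$)] mirrors that of the assignment LP: each column corresponds to a machine-job pair $x_{i,j}$, the equality constraints~(\ref{lp-assign}) are indexed by jobs, and the inequality constraints~(\ref{lp-makespan}) are indexed by machines. The key observation is that tight constraints at an extreme point must have rank equal to the number of support variables, and that the tight constraints naturally decompose along connected components of $\mathcal{G}(x)$.

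Concretely, I would fix an extreme point $x$ and isolate a connected component $C$ of $\mathcal{G}(x)$, with job set $\J_C$, machine set $\M_C$, and edge set $E_C := \{(i,j) \in \M_C \times \J_C : x_{i,j} > 0\}$. A variable $x_{i,j}$ with $(i,j) \in E_C$ only appears in the assignment constraint for $j$ (with $j \in \J_C$) and in the makespan constraint for $i$ (with $i \in \M_C$), since by definition of the component all edges incident to $\J_C$ land in $\M_C$ and vice versa. Consequently, when we restrict the system of tight constraints to the variables supported on $C$, the relevant rows are at most $|\J_C|$ assignment equalities plus at most $|\M_C|$ makespan inequalities that happen to be tight.

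I would then invoke the extreme-point characterization: after setting $x_{i,j} = 0$ for $(i,j) \notin E(x)$, the remaining restricted system must have unique solution $x|_{E(x)}$, hence its coefficient matrix must have rank equal to $|E(x)|$. Decomposing this matrix along components, the restriction to $E_C$ must have rank equal to $|E_C|$. Combined with the previous paragraph, this gives
\[
|E_C| \;\leq\; |\J_C| + |\M_C|.
\]
Since $C$ is connected as a bipartite graph with $|\J_C| + |\M_C|$ vertices, it contains at least $|\J_C| + |\M_C| - 1$ edges, so the cyclomatic number $|E_C| - (|\J_C| + |\M_C|) + 1$ is either $0$ or $1$. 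Either way, $C$ contains at most one cycle.

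The only subtle point, and the step I would take most care with, is the linear-independence claim: I need to argue that the block of tight constraints indexed by $\J_C \cup \M_C$ and restricted to $E_C$ is actually the relevant block, i.e., that there is no interaction between components that could inflate the rank inside one component without inflating the edge count there. This follows because the coefficient matrix is block-diagonal across components (no variable in $E_C$ appears in a constraint indexed by a job or machine outside $C$), so global rank equals the sum of per-component ranks; in particular $|E_C|$ cannot exceed the per-component rank bound $|\J_C| + |\M_C|$. Everything else is accounting.
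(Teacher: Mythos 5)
Your argument is correct and is precisely the classical rank-counting proof of Lenstra, Shmoys, and Tardos that the paper cites for this proposition rather than reproving: extreme-point $\Rightarrow$ full column rank of the tight constraints on the support, block-diagonality across components gives $|E_C| \leq |\J_C| + |\M_C|$, and the cyclomatic number bound finishes it. No gaps; the one ``subtle point'' you flag (that rank decomposes component-wise) is handled correctly by the block-diagonal observation.
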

As a graph with at most one cycle is either a tree or a tree plus one edge, the connected components of $\mathcal{G}(x)$ are called \emph{pseudotrees} and the whole graph is called a {\em pseudoforest}.
It is not hard to see that the edges of an undirected pseudoforest can always be oriented in a way that every node has an {\em in-degree} of at most one. We call such a $\mathcal{G}(x)$ a {\em properly oriented} pseudoforest. Such an orientation can easily be obtained for each connected component as follows: We first orient the edges on the unique cycle (if it exists) consistently (e.g., clockwise) so as to obtain a directed cycle. 
Then, for every node of the cycle that is also connected with nodes outside the cycle, we define a subtree using this node as a root and we direct its edges away from that root (see e.g., Figure~\ref{fig:pseudo}).

Now fix a properly oriented $\mathcal{G}(x)$ with set of oriented edges $\bar{E}$. For $j \in \J$, we define $p(j) \in \M$ to be its unique {\em parent-machine} with $(p(j), j) \in \bar{E}$, if it exists, and $T(j) = \{i \in \M \suchthat (j,i) \in \bar{E}\}$ to be the set of {\em children-machines} of $j$, respectively. Notice, that for every machine $i$, there exists at most one $j \in \J$ such that $i \in T(j)$. The decision procedures we construct in this paper are based, unless otherwise stated, on the following scheme:

\textsc{Algorithm}: Given a target makespan $C$: 
\begin{enumerate}
\item If [LP($C$)] is feasible, compute an extreme point solution $x$ of [LP($C$)] and construct a properly oriented $\mathcal{G}(x)$. (Otherwise, report that $C < \OPT$.)
\item A {\em rounding scheme} assigns every job $j \in \J$ either only to its parent machine $p(j)$, or to a subset of its children-machines $T(j)$ (see Section \ref{sec:rounding}).
\item According to the rounding, every job $j \in \J$ that has been assigned to $T(j)$ is placed at the beginning of the schedule (these jobs are assigned to disjoint sets of machines).
\item At any point a machine $i$ becomes idle, it processes any unscheduled job $j$ that has been rounded to $i$ such that $i = p(j)$.
\end{enumerate}

\begin{figure}[t]
\centering
\begin{minipage}[b]{0.4\textwidth}
\centering
\begin{tikzpicture}[->,>=stealth',shorten >=0.5pt,auto,node distance=1.5cm,
                    thick,job node/.style={circle,draw,font=\sffamily\large\bfseries}, machine node/.style={rectangle,draw,font=\sffamily\large\bfseries}]

  \node[machine node] (1) {$i_1$};
  \node[job node] (2) [below left of=1] {$j_1$};
  \node[machine node] (3) [above left of=2]  {$i_2$};
  \node[job node] (4) [above right of=3] {$j_2$};
  \node[job node] (5) [right of=1] {$j_3$};
  \node[machine node] (6) [below left of=5]  {$i_3$};
  \node[machine node] (7) [below right of=5]  {$i_4$};
  \node[machine node] (8) [above of=1]  {$i_5$};

  \path[every node/.style={font=\sffamily\small}]
    (1) edge  node[left] {} (2)
    edge  node[left] {} (5)
    (2) edge node [right] {} (3)
    (3) edge node[right] {} (4)
    (4) edge node[right] {} (1)
        edge node[right] {} (8)
    (5) edge node[left] {} (6)
    edge node[left] {} (7);
    \draw[dashed, lightgray] (0,0) ellipse (0.6cm and 0.6cm);
    \draw[dashed, lightgray] (1.5,-1.1) ellipse (1.5cm and 0.7cm);
    \node[scale = 1, text width=1cm, red] at (1.5,-1.5) {$T(j_3)$};
	 \node[scale = 1, text width=1cm, red] at (0.75,0.7) {$p(j_3)$};
\end{tikzpicture}\vspace*{-0.55cm}
\caption{A properly oriented pseudotree with indegree at most $1$ for each node.}
\label{fig:pseudo}
\end{minipage}
\hspace*{2em}
\begin{minipage}[b]{0.4\textwidth}
\centering
\resizebox{0.8\textwidth}{!}{%
\begin{tikzpicture}[label/.style={ postaction={ decorate,transform shape, decoration={ markings, mark=at position .5 with \node #1;}}}]
      \draw[->,thick] (0,0) -- (5,0) node[right] {$\theta$};
      \draw[->,thick] (0,0) -- (0,5) node[right] {$g(\theta) = \sigma_j(S_j(\theta))$};
      \draw[scale=1,domain=0:5,thick, smooth,variable=\y,red, label={[above]{$\frac{\gamma_j f_j(\gamma_j)}{(\alpha+2\theta-2)C}$}}]  plot ({\y},{5/(3.163 + 2*\y/5 -2)}) ;
      \draw[color=black, thick, name path=A] plot coordinates {
		(0.02,3.5)
		(0.6,3.5)
		(0.6,3)
		(1.2,3)
		(1.2,2.6)
		(1.7,2.6)
		(1.7,2.0)
		(2.7,2.0)
		(2.7,1.8)
		(3.7,1.8)
		(3.7,1.5)
		(4.7,1.5)
		(4.7,1.4)
		(4.8,1.4)
		(4.8,0)
	};
	\draw[scale=1,domain=0.02:4.8,smooth,variable=\y,name path=B]  plot ({\y},{0.02}) ;
	\tikzfillbetween[of=A and B] {color=gray!20};
	\node[scale = 0.7, text width=5cm] at (2,1) {$$\int_{0}^{1} g(\theta) d\theta = \sum_{i \in T(j)} s_{i,j} (1 - \frac{\ell_i}{C})$$};
	\node[scale = 1, text width=1cm] at (0.4,-0.2) {$0$};
	\node[scale = 1, text width=1cm] at (5.2,-0.2) {$1$};
\end{tikzpicture}
}\vspace*{-0.55cm}
\caption{Volume argument for selecting a subset of the children machines in the proof of Proposition~\ref{speed:filtering}.}
\label{fig:fn}
\end{minipage}
\end{figure}

\section{Rounding schemes} \label{sec:rounding}
In each of the following rounding schemes, we are given as an input an extreme point solution $x$ of [LP(C)] and a properly oriented pseudoforest $\mathcal{G}(x)=(V, \bar{E})$.

\subsection{A simple $4$-approximation for unrelated machines}
\label{sec:rounding:simple}

We start from the following simple rounding scheme: For each job $j$, assign $j$ to its parent-machine $p(j)$ if $x_{p(j),j} \geq \frac{1}{2}$, or else, assign $j$ to its children-machines $T(j)$.
Formally, let $\J^{(1)} := \{j \in \J \suchthat x_{p(j),j} \geq \frac{1}{2}\}$ be the sets of jobs that are assigned to their parent-machines and $\J^{(2)} := \J \setminus \J^{(1)}$ the rest of the jobs. Recall that we first run the jobs in $\J^{(2)}$ and then the jobs in $\J^{(1)}$ as described at the end of the previous section.
For $i \in \M$, define $J^{(1)}_i := \{j \in \J^{(1)} \suchthat p(j) = i\}$ and $J^{(2)}_i := \{j \in \J^{(2)} \suchthat i \in T(j)\}$ as the sets of jobs in $\J^{(1)}$ and $\J^{(2)}$, respectively, that get assigned to $i$ (note that $|J^{(2)}_i| \leq 1$, as each machine gets assigned at most one job as a child-machine).
Furthermore, let $\ell_i := \sum_{j \in J_i^{(1)}} f_j(r_{i,j}) \frac{r_{i,j}}{s_{i,j}}x_{i,j}$ be the fractional load incurred by jobs in $\J^{(1)}$ on machine $i$ in the LP solution $x$.

\begin{proposition}\label{speed:parent}
Let $i \in \M$. Then $\sum_{j \in J^{(1)}_i} f_j(\{i\}) \leq 2\ell_i$.
\end{proposition}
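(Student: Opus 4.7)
The plan is to prove the bound pointwise: for each $j \in \J^{(1)}_i$, I will show that $f_j(s_{i,j})$ is at most twice the contribution of $j$ to $\ell_i$ in the LP. Summing over $j \in \J^{(1)}_i$ (which, by definition of $p(j)$, is a subset of $\J^{(1)}$ with $x_{i,j} > 0$ and hence is indexed into $\ell_i$) then gives the claim. I split the analysis according to whether $j \in J^+_i$ or $j \in J^-_i$.

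First, for $j \in \J^{(1)}_i \cap J^+_i$, the contribution to $\ell_i$ is exactly $f_j(s_{i,j}) x_{i,j}$. Since $j \in \J^{(1)}$ means $x_{p(j),j} = x_{i,j} \geq \tfrac{1}{2}$, we immediately get $f_j(s_{i,j}) \leq 2 f_j(s_{i,j}) x_{i,j}$. This is the easy case.

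The main step is the case $j \in \J^{(1)}_i \cap J^-_i$, where I need $f_j(s_{i,j}) \leq 2 \cdot \frac{\gamma_j f_j(\gamma_j)}{s_{i,j}} x_{i,j}$. Here $j \in J^-_i$ means $f_j(s_{i,j}) > C \geq f_j(\gamma_j)$; combined with the non-increasing processing time assumption, this forces $s_{i,j} \leq \gamma_j$. Now I apply Fact~\ref{fact:technical}(2) with $q = \gamma_j$ and $q' = s_{i,j}$ to obtain
\[
f_j(s_{i,j}) \;\leq\; \frac{\gamma_j}{s_{i,j}}\, f_j(\gamma_j).
\]
Multiplying by $1 \leq 2 x_{i,j}$ (using $x_{i,j} \geq \tfrac{1}{2}$) yields the desired inequality. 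The only subtlety is making sure the inequality $s_{i,j} \leq \gamma_j$ is properly justified; everything else is a direct application of Fact~\ref{fact:technical}, which repackages the non-decreasing work assumption.

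Finally, summing both cases over $j \in \J^{(1)}_i$ gives
\[
\sum_{j \in \J^{(1)}_i} f_j(s_{i,j}) \;\leq\; 2 \Biggl( \sum_{j \in \J^{(1)}_i \cap J^+_i} f_j(s_{i,j}) x_{i,j} + \sum_{j \in \J^{(1)}_i \cap J^-_i} \frac{f_j(\gamma_j)\gamma_j}{s_{i,j}} x_{i,j} \Biggr) \;\leq\; 2\ell_i,
\]
where the last inequality uses $\J^{(1)}_i \subseteq \J^{(1)}$ and nonnegativity of the remaining terms in the definition of $\ell_i$. I do not anticipate any real obstacle; the essential technical ingredient is Fact~\ref{fact:technical}(2), which is exactly the tool designed to convert the ``$j$ needs more than one machine'' regime ($s_{i,j}<\gamma_j$) into a usable bound on $f_j(s_{i,j})$ in terms of the critical-speed load coefficient.
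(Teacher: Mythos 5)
Your proof is correct and follows essentially the same route as the paper's: the case split into $J^+_i$ and $J^-_i$, the use of $x_{i,j}\geq\frac12$ to insert the factor $2x_{i,j}$, and the application of Fact~\ref{fact:technical}(2) with $s_{i,j}<\gamma_j$ in the $J^-_i$ case are exactly the paper's argument. Your explicit justification of $s_{i,j}<\gamma_j$ from $f_j(s_{i,j})>C\geq f_j(\gamma_j)$ and monotonicity is a welcome bit of added care.
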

\begin{proof}
Let $j \in J^{(1)}_i$.
Since $x_{i,j}\geq \frac{1}{2}$ by definition of $\J^{(1)}$, we get $f_j(s_{i,j}) \leq 2 f_j(s_{i,j}) x_{i,j}$.
Furthermore, since $s_{i,j} \leq \max\{s_{i,j}, \gamma_j\} = r_{i,j}$, the by Fact~\ref{fact:technical}, we have that $f_j(\{i\}) = f_j(s_{i,j}) \leq f_j(r_{i,j}) \frac{r_{i,j}}{s_{i,j}}$. Thus, by summing up over all jobs in $J^{(1)}_i$, we get
\begin{ceqn}
$$\sum_{j \in J^{(1)}_i} f_j(\{i\}) \leq 2 \sum_{j \in J^{(1)}_i} f_j(r_{i,j}) \frac{r_{i,j}}{s_{i,j}} x_{i,j} = 2\ell_i.$$
\end{ceqn}
\end{proof}

\begin{proposition}\label{speed:children}
Let $j \in \J^{(2)}$. Then $f_j(T(j)) \leq 2C$.
\end{proposition}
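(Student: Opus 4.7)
The plan is to first use the defining condition of $\J^{(2)}$ to get a lower bound on the total fractional assignment of $j$ to its children-machines, then translate this via the LP constraint into a lower bound on the total speed $\sigma_j(T(j))$, and finally invoke the non-decreasing work property (Fact~\ref{fact:technical}) to bound $f_j(\sigma_j(T(j)))$ from above by $2C$.

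More concretely, since $j \in \J^{(2)}$ means $x_{p(j),j} < 1/2$, constraint \eqref{lp-assign} gives $\sum_{i \in T(j)} x_{i,j} > 1/2$. I would then show that for every $i \in T(j)$ we have the uniform bound $x_{i,j} \leq \frac{C\, s_{i,j}}{\gamma_j f_j(\gamma_j)}$, irrespective of whether $j \in J^+_i$ or $j \in J^-_i$. For $i$ with $j \in J^-_i$ this is immediate from the LP makespan constraint~\eqref{lp-makespan}. For $i$ with $j \in J^+_i$ (so $s_{i,j} \geq \gamma_j$), the constraint only directly gives $f_j(s_{i,j}) x_{i,j} \leq C$, but the non-decreasing work property implies $f_j(s_{i,j}) s_{i,j} \geq \gamma_j f_j(\gamma_j)$, from which the same inequality for $x_{i,j}$ follows. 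Summing the bound over all $i \in T(j)$ and using $\sum_{i \in T(j)} x_{i,j} > 1/2$ yields $\sigma_j(T(j)) > \frac{\gamma_j f_j(\gamma_j)}{2C}$.

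It then remains to bound $f_j(\sigma_j(T(j)))$. If $\sigma_j(T(j)) \geq \gamma_j$, monotonicity of $f_j$ gives $f_j(\sigma_j(T(j))) \leq f_j(\gamma_j) \leq C$ by definition of $\gamma_j$. Otherwise, Fact~\ref{fact:technical}(2) applied with $q = \gamma_j$ and $q' = \sigma_j(T(j))$ gives $f_j(\sigma_j(T(j))) \leq \frac{\gamma_j}{\sigma_j(T(j))} f_j(\gamma_j)$, and substituting the lower bound on $\sigma_j(T(j))$ produces $f_j(\sigma_j(T(j))) < 2C$. Combining both sub-cases yields $f_j(T(j)) \leq 2C$, as desired.

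The main subtlety — and the step I expect to be the trickiest — is handling the asymmetry between $J^+_i$ and $J^-_i$ in constraint~\eqref{lp-makespan}: the coefficient for jobs in $J^+_i$ is $f_j(s_{i,j})$ rather than $\frac{\gamma_j f_j(\gamma_j)}{s_{i,j}}$, so one must invoke the non-decreasing work property to obtain the same $x_{i,j}$-bound in both regimes before the summation works cleanly. Once this unification is in place, the rest is a direct application of Fact~\ref{fact:technical}.
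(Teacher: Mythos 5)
Your proof is correct and follows essentially the same route as the paper: sum the LP makespan constraints over $T(j)$ to lower-bound $\sigma_j(T(j))$ by $\gamma_j f_j(\gamma_j)/(2C)$, then apply Fact~\ref{fact:technical}. The only (harmless) difference is that the paper dispatches the case where some $i \in T(j)$ has $j \in J^+_i$ immediately via $f_j(T(j)) \leq f_j(\{i\}) \leq C$, whereas you fold those machines into the summation using the work inequality $f_j(s_{i,j}) \geq \gamma_j f_j(\gamma_j)/s_{i,j}$ --- the same observation the paper itself invokes later in the proof of Proposition~\ref{speed:filtering}.
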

\begin{proof}
We assume that for all $i \in T(j)$, $s_{i,j} \leq \gamma_j$, since, otherwise, given some $i' \in T(j)$ with $s_{i,j} > \gamma_j$, we trivially have that $f_j(T(j)) \leq f_j(\{i'\}) \leq f_j(\gamma_j) \leq C$. 

For any $i \in T(j)$ and using the fact that $r_{i,j} = \gamma_j$, constraints \eqref{lp-makespan} imply that $f_j(\gamma_j) \frac{\gamma_j}{s_{i,j}} x_{i,j} \leq C$ for all $i \in T(j)$. 
Summing over all these constraints yields $\sum_{i \in T(j)} \frac{f_j(\gamma_j)}{C}\gamma_j x_{i,j} \leq \sigma_i(T(j))$.
Using the fact that $\sum_{i \in T(j)} x_{i,j} > \frac{1}{2}$ because $j \in  \J^{(2)}$, we get $\sigma_i(T(j)) \geq \frac{1}{2}\gamma_j \frac{f_j(\gamma_j)}{C}$. By combining this with the fact that $f_j(\gamma_j) \leq C$, by definition of $\gamma_j$, and using Fact~\ref{fact:technical}, this implies that $f_j(T(j)) \leq 2C$.

\end{proof}

Clearly, the load of any machine $i \in \M$ in the final schedule is the sum of the load due to the execution of $\J^{(1)}$, plus the processing time of at most one job of $\J^{(2)}$. By Propositions \ref{speed:parent}, \ref{speed:children} and the fact that $\ell_i \leq C$ for all $i \in \M$ by constraints \eqref{lp-makespan}, it follows that any feasible solution of [LP(C)] can be rounded in polynomial-time into a feasible schedule of makespan at most $4C$.

\subsection{An improved $\frac{2e}{e-1} \approx 3.163$-approximation for unrelated machines}
\label{sec:improved-rounding}

In the simple rounding scheme described above, it can be the case that the overall makespan improves by assigning some job $j \in \J^{(2)}$ only to a subset of the machines in $T(j)$. This happens because some machines in $T(j)$ may have significantly higher load from jobs of $\J^{(1)}$ than others, but job $j$ will incur the same additional load to all machines it is assigned to.

We can improve the approximation guarantee of the rounding scheme by taking this effect into account and filtering out children-machines with a high load. Define $\J^{(1)}$ and $\J^{(2)}$ as before. Every job in $j \in J^{(1)}$ is assigned to its parent-machine $p(j)$, while every job $j \in \J^{(2)}$ is assigned to a subset of $T(j)$, as described below.

For $j \in \J^{(2)}$ and $\theta \in [0, 1]$ define $S_j(\theta) := \{ i \in T(j) \suchthat 1 - \frac{\ell_i}{C} \geq \theta\}$. Choose $\theta_j$ so as to minimize $2(1 - \theta_j)C + f_j(S_j(\theta_j))$ (note that this minimizer can be determined by trying out at most $|T(j)|$ different values for $\theta_j$). 
We then assign each job in $j \in J^{(2)}$ to the machine set $S_j(\theta_j)$.


By Proposition \ref{speed:parent}, we know that the total load of each machine $i \in \M$ due to the execution of jobs from $\J^{(1)}$ is at most $2\ell_i$. Recall that there is at most one $j \in J^{(2)}$ with $i \in T(j)$. If $i \notin S_j(\theta_j)$, then load of machine $i$ bounded by $2\ell_i \leq 2C$. If $i \in S_j(\theta_j)$, then the load of machine $i$ is bounded by
\begin{ceqn} 
\begin{align}\label{speed:load}
\max_{i' \in S_j(\theta_j)}\Big\{2\ell_{i} + f_j(S_j(\theta_j)) \Big\} \leq 2(1 - \theta_j)C + f_j(S_j(\theta_j)),
\end{align}
\end{ceqn}
where the inequality comes from the fact that $1 - \frac{\ell_{i'}}{C} \geq \theta_j$ for all $i' \in S_{\theta_j}$. The following proposition gives an upper bound on the RHS of \eqref{speed:load} as a result of our filtering technique and proves Theorem~\ref{thm:filtering-approx}.
\begin{proposition}\label{speed:filtering}
For each $j \in \J^{(2)}$, there exists a $\theta \in [0,1]$ such that $2(1- \theta)C + f_j(S_j(\theta)) \leq \frac{2e}{e-1} C$.
\end{proposition}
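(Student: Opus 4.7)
The plan is to exploit the layer-cake formula for $\sigma_j(S_j(\theta))$ together with the LP makespan constraint to turn the proposition into an integral inequality, and then argue by contradiction that a good $\theta$ must exist.

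\medskip

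\noindent\textbf{Step 1: Layer-cake identity.} Let $g(\theta) := \sigma_j(S_j(\theta))$. Since $S_j(\theta)$ is the set of children-machines $i\in T(j)$ with $1-\ell_i/C\ge \theta$, a machine $i$ contributes $s_{i,j}$ to $g(\theta)$ precisely on the interval $\theta\in[0,1-\ell_i/C]$ (note $\ell_i\le C$ by feasibility of [LP$(C)$]). Hence
\begin{equation*}
\int_0^1 g(\theta)\,d\theta \;=\; \sum_{i\in T(j)} s_{i,j}\Bigl(1-\tfrac{\ell_i}{C}\Bigr).
\end{equation*}

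\noindent\textbf{Step 2: Lower bound on the integral via the LP.} For each $i\in T(j)$, job $j$ alone contributes $f_j(s_{i,j})x_{i,j}$ (if $j\in J_i^+$) or $\frac{f_j(\gamma_j)\gamma_j}{s_{i,j}}x_{i,j}$ (if $j\in J_i^-$) to the LHS of constraint~\eqref{lp-makespan}, and all other summands are nonnegative and include the $\J^{(1)}$-contribution $\ell_i$. In the $J_i^+$ case, non-decreasing work gives $s_{i,j}f_j(s_{i,j})\ge \gamma_j f_j(\gamma_j)$; in the $J_i^-$ case the inequality is immediate. In both cases one gets
\begin{equation*}
s_{i,j}\Bigl(1-\tfrac{\ell_i}{C}\Bigr)\;\ge\;\frac{\gamma_j f_j(\gamma_j)}{C}\,x_{i,j}.
\end{equation*}
Summing over $i\in T(j)$ and using $\sum_{i\in T(j)}x_{i,j}=1-x_{p(j),j}\ge \tfrac12$ (because $j\in\J^{(2)}$) yields
\begin{equation*}
\int_0^1 g(\theta)\,d\theta \;\ge\; \frac{\gamma_j f_j(\gamma_j)}{2C}.
\end{equation*}

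\noindent\textbf{Step 3: Contradiction argument.} Suppose that for \emph{every} $\theta\in[0,1]$ we have $2(1-\theta)C+f_j(g(\theta))>\tfrac{2e}{e-1}C$, equivalently $f_j(g(\theta))>2C\bigl(\tfrac{1}{e-1}+\theta\bigr)$. Since $f_j(\gamma_j)\le C<2C\bigl(\tfrac{1}{e-1}+\theta\bigr)$ for all $\theta\ge 0$, monotonicity of $f_j$ forces $g(\theta)<\gamma_j$, so by Fact~\ref{fact:technical} applied at $q=\gamma_j$, $q'=g(\theta)$,
\begin{equation*}
g(\theta) \;<\; \frac{\gamma_j f_j(\gamma_j)}{2C\bigl(\tfrac{1}{e-1}+\theta\bigr)} \qquad\text{for all }\theta\in[0,1].
\end{equation*}

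\noindent\textbf{Step 4: Integrate and conclude.} Integrating the previous display gives
\begin{equation*}
\int_0^1 g(\theta)\,d\theta \;<\; \frac{\gamma_j f_j(\gamma_j)}{2C}\,\Bigl[\ln\bigl(\tfrac{1}{e-1}+\theta\bigr)\Bigr]_0^1 \;=\;\frac{\gamma_j f_j(\gamma_j)}{2C}\,\ln\frac{e/(e-1)}{1/(e-1)}\;=\;\frac{\gamma_j f_j(\gamma_j)}{2C},
\end{equation*}
contradicting Step~2. Hence some $\theta\in[0,1]$ satisfies the desired bound. The calibration of the constant $\tfrac{2e}{e-1}$ is exactly what makes the logarithm equal $1$, matching the lower bound from Step~2; verifying this calibration and the application of Fact~\ref{fact:technical} in the case distinction of Step~2 are the only real obstacles, the rest being the layer-cake bookkeeping.
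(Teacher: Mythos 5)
Your proof is correct and follows essentially the same route as the paper's: the layer-cake identity for $\int_0^1 \sigma_j(S_j(\theta))\,d\theta$, the LP-derived lower bound $\frac{\gamma_j f_j(\gamma_j)}{2C}$ on that integral (including the same use of non-decreasing work to handle the $J_i^+$ case), and the contradiction in which the pointwise failure of the bound is converted via Fact~\ref{fact:technical} into $g(\theta) < \frac{\gamma_j f_j(\gamma_j)}{2C(\frac{1}{e-1}+\theta)}$ and integrated to give exactly $\ln e = 1$. The only difference is cosmetic: you write the constant as $2\bigl(\tfrac{1}{e-1}+\theta\bigr)$ where the paper writes $(\alpha+2\theta-2)$ with $\alpha=\tfrac{2e}{e-1}$, which are identical.
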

\begin{proof}
We first assume that for all $i \in T(j)$, it is the case that $s_{i,j} \leq \gamma_j$ and, thus, $r_{i,j} = \gamma_j$. In the opposite case, where there exists some $i' \in T(j)$ such that $s_{i,j} > \gamma_j$, by choosing $\theta = 0$, then \eqref{speed:load} can be upper bounded by $2C + f_j(S_j(0)) = 2C + f_j(T(j)) \leq 2C + f_j(s_{i',j}) \leq 3C$ and the proposition follows.

Define $\alpha := \frac{2e}{e-1}$. We show that there is a $\theta \in [0, 1]$ with $\sigma_j(S_j(\theta)) \geq \frac{\gamma_j f_j(\gamma_j)}{(\alpha + 2 \theta -2)C}$. Notice that, in that case, $f_j(S_j(\theta)) \leq (\alpha + 2 \theta -2) C$ by Fact~\ref{fact:technical}, implying the lemma.


Define the function $g: [0,1] \rightarrow \mathbb{R}_{+}$ by $g(\theta) := \sigma_j (S_j(\theta))$. It is easy to see $g$ is non-increasing integrable and that 
 $$\int_{0}^{1} g(\theta) d\theta = \sum_{i \in T(j)} s_{i,j} (1 - \frac{\ell_i}{C}).$$ 
 See Figure \ref{fig:fn} for an illustration.

Now assume by contradiction that $g(\theta) < \frac{\gamma_j f_j(\gamma_j)}{(\alpha + 2 \theta -2)C}$ for all $\theta \in [0, 1]$.
Note that $\ell_i + \frac{\gamma_j f_j(\gamma_j)}{s_{i,j}} x_{i,j} \leq C$ for every $i \in T(j)$ by constraints \eqref{lp-makespan} and the fact that $|J^{(2)}_i| \leq 1$.
Hence  $\frac{f_j(\gamma_j)\gamma_j}{C} x_{i,j} \leq s_{i,j}(1 - \frac{\ell_i}{C})$ for all $i \in T(j)$.
Summing over all $i \in T(j)$ and using the fact that $\sum_{i \in T(j)} x_{i,j} \geq \frac{1}{2}$ because $j \in \J^{(2)}$, we get
\begin{ceqn}
\begin{align*}
\frac{f_j(\gamma_j)\gamma_j}{2 C} &\leq \sum_{i \in T(j)} s_{i,j} (1 - \frac{\ell_i}{C}) \\
&= \int_{0}^{1} g(\theta) d\theta \\
&< \frac{f_j(\gamma_j)\gamma_j}{C} \int_{0}^{1} \frac{1}{\alpha + 2 \theta -2} d\theta,
\end{align*}
\end{ceqn}
where the last inequality uses the assumption that $g(\theta) < \frac{\gamma_j f_j(\gamma_j)}{(\alpha + 2 \theta_j -2)C}$ for all $\theta \in [0,1]$. By simplifying the above inequality we get the contradiction
\begin{ceqn}
\[1 < \int_{\alpha-2}^{\alpha} \frac{1}{\lambda} d\lambda = \ln(\frac{\alpha}{\alpha -2}) = 1,\]
\end{ceqn}
which concludes the proof.
\end{proof}
By the above analysis, our main result for the case of unrelated machines follows.

\restateUnrelated*

\begin{remark}We can slightly improve the above algorithm by optimizing over the threshold of assigning each job to the parent- or children-machines in the assignment graph (see Appendix~\ref{appendix:unrelated:tuning} for details). This optimization gives a slightly better approximation guarantee of $\alpha = \inf_{\beta \in (0,1)} \left\{ \frac{e^{\frac{1}{\beta} - 1}}{\beta(e^{\frac{1}{\beta} - 1} - 1)} \right\} \approx 3.14619$.
\end{remark}

\begin{remark}
The LP-based nature of our techniques allows the design of polynomial-time $\mathcal{O}(1)$-approximation
algorithms for the objective of minimizing the sum of weighted completion times, i.e., $\sum_{j \in \J} w_j C_j$, where $w_j$ is the {\em weight} and $C_j$ is the {\em completion time} of job $j \in \J$. This can be achieved by using the rounding theorems of this section in combination with the standard technique of interval-indexed formulations \cite{HSSW97}.
\end{remark}

\subsection{A $7/3$-approximation for restricted identical machines}
\label{sec:restricted-rounding}

We are able to provide an algorithm of improved approximation guarantee for the special case of restricted identical machines. In this case, each job $j\in \J$ is associated with a set of machines $\M_j \subseteq \M$, such that $s_{i,j} = 1$ for $i \in \M_j$ and $s_{i,j} = 0$, otherwise. Notice that our assumption that $f_j(0) = + \infty, \forall j \in \J$ implies that, in the restricted identical machines case, every job $j$ has to be scheduled only on the machines of $\M_j$.

Given a feasible solution to [LP(C)] and a properly oriented $\mathcal{G}(x)$, we define the sets $\J^{(1)} := \{j \in \J \suchthat x_{p(j),j} = 1\}$ and $\J^{(2)} := \J \setminus \J^{(1)}$. The rounding scheme for this special case can be described as follows:
\begin{enumerate}
    \item Every job $j \in \J^{(1)}$ is assigned to $p(j)$ (which is the only machine in $\mathcal{G}(x)$ that is assigned to $j$).
    \item Every job $j \in \J^{(2)}$
    \begin{enumerate}
        \item is assigned to the set $T(j)$ of its children-machines, if $|T(j)| = 1$ or $|T(j)| \geq 3$.
        \item is assigned to the subset $S \subseteq T(j)$ that results in the minimum makespan over $T(j)$, if $|T(j)| = 2$. Notice that for $|T(j)|=2$ there are exactly three such subsets.
    \end{enumerate}
    \item As usual, the jobs of $\J^{(2)}$ are placed at the beginning of the schedule, followed by the jobs of $\J^{(1)}$.
\end{enumerate}

Clearly, by definition of our algorithm, constraints \eqref{lp-makespan} and the fact that $f_j(1) \leq f_j(\gamma_j) \gamma_j$ for all $j \in \J$, the load of any machine that only processes jobs of $\J^{(1)}$ is at most $C$. Therefore, we focus our analysis on the case of machines that process jobs from $\J^{(2)}$. Recall, that every machine $i \in \M$ can process at most one job $j \in \J^{(2)}$ and, thus, the rest of the proof is based on analyzing the makespan of the machines of $T(j)$, for each $j \in \J^{(2)}$.

\begin{proposition} \label{restricted-basic}
Any job $j \in \J^{(2)}$ can be assigned to the set $T(j)$ with processing time most $\frac{|T(j)| + 1}{|T(j)|} C$.
\end{proposition}
\begin{proof}
Fix any job $j \in \J^{(2)}$. By summing over the constraints \eqref{lp-makespan} for $i \in T(j) \cup \{p(j)\}$ (every machine in the support of $j$) and using constraints \eqref{lp-assign}, we have that $\gamma_j f_j(\gamma_j) \leq (|T_j|+1) C$. By applying Fact \ref{fact:technical} and the non-increasing property of $f_j$, we have that: 
\begin{ceqn}
\begin{align*}
   f_j(|T(j)|) &\leq \frac{|T(j)| + 1}{|T(j)|} f_j(|T(j)| + 1) \\
   &\leq \frac{|T(j)| + 1}{|T(j)|} f_j(\frac{f_j(\gamma_j)}{C} \gamma_j) \\
   &\leq \frac{|T(j)| + 1}{|T(j)|} \frac{C}{f_j(\gamma_j)} f_j(\gamma_j) \\
   &\leq \frac{|T(j)| + 1}{|T(j)|} C. 
\end{align*}
\end{ceqn}
\end{proof}
Taking into account that for any machine $i \in \M$, the load due to the jobs of $\J^{(1)}$ is at most $C$, the above proposition gives a makespan of at most $\left(1 + \frac{4}{3}\right)C$, for the machines $T(j)$ of every job $j \in \J^{(2)}$ such that $|T(j)| \geq 3$. The cases where $|T(j)| \in \{1,2\}$ need a more delicate treatment. For any $i \in T(j)$, let $\ell_i := \sum_{j' \in \J^{(1)} | p(j') = i} f_{j'}(1) \leq C$ to be the load of $i$ w.r.t. the jobs of $\J^{(1)}$.
\begin{proposition}
For any job $j \in \J^{(2)}$ with $|T(j)| = 1$ that is assigned to its unique child $i \in T(j)$, the total load of $i$ is at most $2 C$.
\end{proposition}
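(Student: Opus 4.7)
Since $j \in \J^{(2)}$ with $|T(j)| = 1$, the neighbors of $j$ in $\mathcal{G}(x)$ are exactly the parent-machine $p(j)$ and the unique child $i$, so the assignment constraint~\eqref{lp-assign} gives $x_{p(j),j} + x_{i,j} = 1$. Moreover, since $j \notin \J^{(1)}$, we have $x_{p(j),j} < 1$, so both fractional assignments are strictly positive. The plan is to bound the final load $\ell_i + f_j(1)$ on machine $i$ by coupling the LP constraints~\eqref{lp-makespan} on $i$ and on $p(j)$; the key observation is that the LP contribution of $j$ to any of its support-machines is at least $f_j(1)$ times the corresponding fractional assignment, regardless of whether that machine places $j$ in $J^+$ or in $J^-$.

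First I would invoke the LP constraint on machine $i$. Since $s_{i,j}=1$, the term contributed by $j$ to the left-hand side equals $f_j(1)\,x_{i,j}$ when $j \in J^+_i$ and $f_j(\gamma_j)\gamma_j\,x_{i,j}$ when $j \in J^-_i$. Applying Fact~\ref{fact:technical} with $q=\gamma_j$ and $\alpha = 1/\gamma_j \in (0,1]$ gives $f_j(1) \leq \gamma_j f_j(\gamma_j)$, so uniformly across both cases this contribution is at least $f_j(1)\,x_{i,j}$. Together with the fact that all remaining contributions on the left-hand side of the LP constraint for $i$ (including those of child-jobs of $i$ lying in $\J^{(2)}$) are non-negative, this yields
\begin{equation*}
\ell_i + f_j(1)\,x_{i,j} \;\leq\; C,
\qquad\text{i.e.,}\qquad
\ell_i \;\leq\; C - f_j(1)\,x_{i,j}.
\end{equation*}

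Next, the exact same reasoning applied to the LP constraint on $p(j)$ shows that $j$'s contribution there is at least $f_j(1)\,x_{p(j),j}$, and therefore $f_j(1)\,x_{p(j),j} \leq C$. Combining the two bounds and using $x_{p(j),j} = 1 - x_{i,j}$, the final load on $i$ satisfies
\begin{equation*}
\ell_i + f_j(1) \;\leq\; \bigl(C - f_j(1)\,x_{i,j}\bigr) + f_j(1) \;=\; C + f_j(1)\,x_{p(j),j} \;\leq\; 2C,
\end{equation*}
as claimed. There is no real obstacle in this argument; the only subtlety is leveraging Fact~\ref{fact:technical} to obtain a uniform lower bound $f_j(1)\,x_{\cdot,j}$ on the contribution of $j$ to each of its two support-machines, which lets us handle the $J^+$ and $J^-$ cases simultaneously and then telescope the two constraints via $x_{p(j),j} + x_{i,j} = 1$.
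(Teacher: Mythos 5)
Your proof is correct and follows essentially the same route as the paper's: both arguments charge job $j$'s processing time $f_j(1)$ against the two LP constraints \eqref{lp-makespan} for $i$ and $p(j)$, split via $x_{i,j} + x_{p(j),j} = 1$ and the bound $f_j(1) \leq \gamma_j f_j(\gamma_j)$ from Fact~\ref{fact:technical}. The only (cosmetic) difference is that you charge $f_j(1)$ per unit of fractional assignment while the paper charges $\gamma_j f_j(\gamma_j)$, and your explicit uniform treatment of the $J^+_i$ versus $J^-_i$ cases is slightly more careful than the paper's.
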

\begin{proof}
Consider a job $j \in \J^{(2)}$ of critical speed $\gamma_j$, such that $|T(j)| = 1$, and let $i$ be the unique child-machine in $T(j)$. By our assumption that $i \in \M_j$ and the fact that $\gamma_j \geq 1$, we have that $r_{i,j} = \gamma_{i,j}$. By constraints \eqref{lp-makespan} of [LP(C)], it is the case that $\ell_i + \gamma_j f_j(\gamma_j) x_{i,j} \leq C$. Therefore, since our algorithm assigns job $j$ to $i$, the load of the latter becomes: $\ell_i + f_j(1) \leq \ell_i + \gamma_j f_j(\gamma_j)x_{i,j} + \gamma_j f_j(\gamma_j)(1-x_{i,j}) \leq C + \gamma_j f_j(\gamma_j)x_{p(j),j}$, where we used the fact that $f_j(1) \leq \gamma_j f_j(\gamma_j)$ and that $x_{i,j} + x_{p(j),j} = 1$ by constraints \eqref{lp-assign}. However, by constraints \eqref{lp-makespan} for $p(j) \in \M$, we can see that $\gamma_j f_j(\gamma_j)x_{p(j),j} \leq C$, which completes the proof.
\end{proof}

\begin{proposition}
Consider any job $j \in \J^{(2)}$ with $|T(j)| = 2$ that is assigned to the subset $S \subseteq T(j)$ that results in the minimum load. The load of any machine $i \in S$ is at most $\frac{9}{4}C$.
\end{proposition}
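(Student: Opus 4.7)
The plan is to consider the three candidate subsets $\{i_1\}, \{i_2\}, T(j)$ tried by the algorithm and show at least one of them yields load at most $\tfrac{9}{4}C$ on every machine of $T(j)$. Writing $T(j) = \{i_1, i_2\}$ with $\ell_{i_1} \geq \ell_{i_2}$ (WLOG), the three resulting maximum loads are
\begin{align*}
M_a := \ell_{i_1} + f_j(1), \qquad
M_b := \max(\ell_{i_1},\, \ell_{i_2} + f_j(1)), \qquad
M_c := \ell_{i_1} + f_j(2).
\end{align*}
Since $f_j$ is non-increasing, $M_c \leq M_a$, so it suffices to upper bound $\min(M_b, M_c)$ by $\tfrac{9}{4}C$.

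First I would dispose of the easy case $\gamma_j = 1$ (equivalently $f_j(1) \leq C$), where $M_b \leq \ell_{i_2} + f_j(1) \leq 2C$ directly. Assuming $\gamma_j \geq 2$, every machine in the support of $j$ (all have speed $1$) satisfies $j \in J^-_i$, so the coefficient of $x_{i,j}$ in constraint~\eqref{lp-makespan} equals $w := \gamma_j f_j(\gamma_j)$, and Fact~\ref{fact:technical} yields $f_j(1) \leq w$ and $f_j(2) \leq w/2$.

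The key step is to bound $w$ by summing the relaxation $\ell_i + w\, x_{i,j} \leq C$ of constraint~\eqref{lp-makespan} over all machines in the support of $j$ in $\mathcal{G}(x)$ (at most three: $i_1$, $i_2$, and possibly $p(j)$). Combined with assignment constraint~\eqref{lp-assign}, this yields $w + \ell_{i_1} + \ell_{i_2} \leq 3C$, so in particular $w \leq 3C$. Substituting into $M_b$ and $M_c$:
\begin{align*}
M_b \leq \max(\ell_{i_1},\, \ell_{i_2} + w) \leq \max(\ell_{i_1},\, 3C - \ell_{i_1}) = 3C - \ell_{i_1},
\end{align*}
using $\ell_{i_1} \leq C \leq \tfrac{3C}{2}$; and $M_c \leq \ell_{i_1} + \tfrac{w}{2} \leq \ell_{i_1} + \tfrac{3C}{2}$.

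The conclusion is a routine one-variable check: the two linear functions $3C - \ell_{i_1}$ and $\ell_{i_1} + \tfrac{3C}{2}$ of $\ell_{i_1} \in [0, C]$ cross at $\ell_{i_1} = \tfrac{3C}{4}$, where both equal $\tfrac{9C}{4}$, and their pointwise minimum is maximized at this crossing point. The only step requiring care is the summation bound on $w$ — specifically, verifying that every machine in the support of $j$ has $j \in J^-_i$ (so that the coefficient is uniformly $w$ rather than $f_j(1)$), which is why the $\gamma_j = 1$ case has to be peeled off first.
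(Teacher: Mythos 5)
Your proof is correct and follows essentially the same route as the paper's: enumerate the three candidate subsets, sum the machine constraints over the support of $j$ to obtain $\gamma_j f_j(\gamma_j) \le 3C$, bound $f_j(1)$ and $f_j(2)$ via Fact~\ref{fact:technical}, and balance two of the three resulting load bounds to reach $\tfrac{9}{4}C$. The only (cosmetic) difference is that you order the two children by load and optimize over $\ell_{i_1}$, whereas the paper orders them by LP fraction and averages the two terms of the minimum while tracking $x_{p(j),j}$.
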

\begin{proof}
We first notice that for the case where $\gamma_j \leq |T(j)| = 2$, by assigning $j$ to all the machines of $T(j)$, the total load of any $i \in T(j)$ is at most: $\ell_i + f_j(T(j)) \leq \ell_i + f_j(\gamma_j) \leq 2C$, by constraints \eqref{lp-makespan}. Therefore, we focus on the case where $\gamma_j \geq 3$ (since we assume that $\gamma_j$ is an integer). For the machines of $T(j)$, denoted by $T(j) = \{i_1,i_2 \}$, we assume w.l.o.g. that $x_{i_1,j} \geq x_{i_2,j}$. 

Our algorithm attempts to schedule $j$ on the sets $S_1 = \{i_1\}$, $S_2 = \{i_2\}$ and $S_B = \{i_1, i_2\}$ and returns the assignment of minimum makespan. We can express the maximum load of $T(j)$ in the resulting schedule as:
\begin{ceqn} 
\begin{align}
    &\min \Big\{ f_j(1) + \ell_{i_1}, f_j(1) + \ell_{i_2}, f_j(2) + \max \big\{ \ell_{i_1}, \ell_{i_2} \big\}  \Big\}\nonumber\\
    &\leq C+\min \Big\{ f_j(1) - \gamma_j f_j(\gamma_j)x_{i_1,j}, f_j(1) - \gamma_j f_j(\gamma_j)x_{i_2,j}, f_j(2) +\gamma_j f_j(\gamma_j) \max \big\{- x_{i_1,j}, -x_{i_2,j} \big\}  \Big\}\nonumber\\
    &\leq  C + \min \Big\{f_j(1) - \gamma_j f_j(\gamma_j) x_{i_1,j}, f_j(2) - \gamma_j f_j(\gamma_j) x_{i_2,j} \Big\}\nonumber\\
    &\leq C + \frac{1}{2} \Big( f_j(1) +  f_j(2) - \gamma_j f_j(\gamma_j) (x_{i_1,j} + x_{i_2,j}) \Big) \nonumber\\ 
    &= C + \frac{1}{2} \Big( f_j(1) +  f_j(2) - \gamma_j f_j(\gamma_j) (1 - x_{p(j),i}) \Big) \label{restricted:temp},
\end{align}
\end{ceqn} 
where the first inequality follows by the fact that $\ell_i \leq C - \gamma_j f_j(\gamma_j)x_{i,j}$, by constraints \eqref{lp-makespan}. Furthermore, the second inequality follows by the assumption that $x_{i_1,j} \geq x_{i_2,j}$, while the third inequality follows by balancing the two terms of the minimization. Finally the equality follows by the fact that $x_{i_1,j} + x_{i_2,j} = 1 - x_{p(j),j}$, by constraints \eqref{lp-assign}.

By constraints \eqref{lp-makespan}, we get that $x_{p(j),j} \leq \frac{C}{\gamma_j f_j(\gamma_j)}$, while by Fact \ref{fact:technical} we have that: $\gamma_j f_j(\gamma_j) \geq f_j(1)$, given that $\gamma_j \geq 3$. Moreover, by Fact \ref{fact:technical}, we have that $f_j(2) \leq \frac{3}{2} C$, using the analysis of Proposition \ref{restricted-basic}. By combining the above, we get that 
\begin{ceqn} 
\begin{align*}
    \eqref{restricted:temp}&\leq C + \frac{1}{2} \Big( f_j(1) +  f_j(2)  - \gamma_j f_j(\gamma_j) +  \gamma_j f_j(\gamma_j) x_{p(j),i}) \Big) \\
    &\leq C + \frac{1}{2} \Big(  f_j(1) +  f_j(2) -  f_j(1) + C \Big) \\
    &\leq \frac{9}{4} C.
\end{align*}
\end{ceqn} 
\end{proof}
By considering the worst of the above scenarios, we can verify that the makespan of the produced schedule is at most $\frac{7}{3} C$, thus, leading to the following theorem.

\restateRestricted*

\subsection{A $3$-approximation for uniform machines}
\label{sec:uniform-rounding}

We prove an algorithm of improved approximation guarantee for the special case of uniform machines, namely, every machine $i \in \M$ is associated with a unique speed $s_i$, such that $s_{i,j} = s_i$ for all $j \in \J$. Given a target makespan $C$, we say that a machine $i$ is {\em $j$-fast} for a job $j \in \J$ if $f_j(\{i\}) \leq C$, while we say that $i$ is {\em $j$-slow} otherwise. 
As opposed to the previous cases, the rounding for the uniform case starts by transforming the feasible solution of [LP($C$)] into another extreme point solution that satisfies a useful structural property, as described in the following proposition:

\begin{restatable}{proposition}{restatePropositionUniform}\label{uniform:bfs}
There exists an extreme point solution $x$ of [LP(C)] that satisfies the following property: For each $j \in \J$ there is at most one $j$-slow machine $i \in \M$ such that $x_{i,j} > 0$ and $x_{i,j'} > 0$ for some job $j' \neq j$. Furthermore, this machine, if it exists, is the slowest machine that $j$ is assigned to.
\end{restatable}
\begin{proof}
Consider a job $j$ and two $j$-slow machines $i_1, i_2 \in \M$, such that $x_{i_1,j}, x_{i_2,j} > 0$.
Let also two jobs $j_1, j_2 \in \J$, other than $j$, such that $x_{i_1, j_1}>0$ and $x_{i_2, j_2}>0$, assuming w.l.o.g. that $s_{i_2} \geq s_{i_1}$. We emphasize the fact $j_1$ and $j_2$ can correspond to the same job.
Recall that for any job $j'$ and machine $i'$, we have that $r_{i',j'} = \gamma_{j'}$, if $i'$ is $j'$-slow, and $r_{i',j'} = s_{i',j'}$, if $i'$ is $j'$-fast. 

We show that we can transform this solution into a new extreme point solution $x'$ such that one of the following is true: (a) $j$ is no longer supported by $i_1$ (i.e. $x'_{i_1,j} = 0$), or (b) $j_2$ is no longer supported by $i_2$ (i.e. $x'_{i_2,j_2} = 0$) and $x'_{i_1,j_2} = x_{i_1,j_2} +  x_{i_2,j_2}$. 
Let $a_{i',j'}$ be the coefficient of $x_{i',j'}$ in the LHS of \eqref{lp-makespan} for some machine $i'$ and job $j'$.
Since $i_1,i_2$ are $j$-slow machines, it is the case that 
$$a_{i_1,j} = f_j(\gamma_j)\frac{\gamma_j}{s_{i_1}} \geq f_j(\gamma_j)\frac{\gamma_j}{s_{i_2}} = a_{i_2,j}.$$
Suppose that we transfer an $\epsilon >0$ mass from $x_{i_1,j}$ to $x_{i_2,j}$ (without violating constraints \eqref{lp-assign}). Then the load of $i_1$ decreases by $\Delta_- = \epsilon f_j(\gamma_j)\frac{\gamma_j}{s_{i_1}}$, while the load of $i_2$ increases by $\Delta_+ = \epsilon f_j(\gamma_j)\frac{\gamma_j}{s_{i_2}}$. 
In order to avoid the violation of constraints \eqref{lp-makespan}, we also transfer an $\epsilon_2$ mass from $x_{i_2,j_2}$ to $x_{i_1,j_2}$, such that: 
$\epsilon_2 a_{i_2,j_2} = \Delta_+$, i.e. $\epsilon_2 = \epsilon \frac{1}{a_{i_2,j_2}}  f_j(\gamma_j)\frac{\gamma_j}{s_{i_2}}$, given that any coefficients $a_{i',j'}$ is strictly positive. Clearly, constraints \eqref{lp-makespan} for $i_2$ are satisfied since the fractional load of the machine stays the same as in the initial feasible solution. 
It suffices to verify that constraints \eqref{lp-makespan} for $i_1$ are also satisfied. 
Indeed, the difference in the load of $i_1$ that results from the above transformation can be expressed as $\epsilon_2 a_{i_1,j_2} - \Delta_-$. 
However, it is the case that: 
\begin{ceqn}
\begin{align*}
    \epsilon_2 a_{i_1,j_2} &= \epsilon \frac{a_{i_1,j_2}}{a_{i_2,j_2}}  f_j(\gamma_j)\frac{\gamma_j}{s_{i_2}} \\
    &\leq \epsilon \frac{s_{i_2}}{s_{i_1}}  f_j(\gamma_j)\frac{\gamma_j}{s_{i_2}} = \epsilon f_j(\gamma_j)\frac{\gamma_j}{s_{i_1}} = \Delta_-
\end{align*}
\end{ceqn}
and therefore, constraint \eqref{lp-makespan} of $i_1$ remains feasible as the load difference is non-positive. 

In the last inequality, we used the fact that $\frac{a_{i_1,j_2}}{a_{i_2,j_2}} \leq \frac{s_{i_2}}{s_{i_1}}$, which can be proved by case analysis: 

(i) If both $i_1,i_2$ are $j_2$-slow, then clearly $$\frac{a_{i_1,j_2}}{a_{i_2,j_2}} = \frac{f_j(\gamma_{j_2}) \gamma_{j_2} / s_{i_1} }{f_j(\gamma_{j_2}) \gamma_{j_2} /s_{i_2}}  =  \frac{s_{i_2}}{s_{i_1}}.$$

(ii) if $i_2$ is $j_2$-fast and $i_1$ is $j_2$-slow, we have that: $$\frac{a_{i_1,j_2}}{a_{i_2,j_2}} = \frac{f_j(\gamma_{j_2}) \gamma_{j_2}/s_{i_1}}{f_{j_2}(s_{i_2})} \leq \frac{s_{i_2}}{s_{i_1}},$$ since by Fact \ref{fact:technical}, we have that $f_{j_2}(s_{i_2}) \geq f_j(\gamma_{j_2}) \frac{\gamma_{j_2}}{s_{i_2}}$. 

(iii) If both $i_1,i_2$ are $j_2$-fast, then $$\frac{a_{i_1,j_2}}{a_{i_2,j_2}} = \frac{f_{j_2}(s_{i_1})}{f_{j_2}(s_{i_2})} \leq \frac{s_{i_2}}{s_{i_1}},$$ which follows by Fact \ref{fact:technical} and the fact that $s_{i_2} \geq s_{i_1}$.

By the above analysis, we can keep exchanging mass in the aforementioned way until either $x_{i_1,j}$ or $x_{i_2, j_2}$ becomes zero. In any case, job $j$ shares at most one of $i_1$ and $i_2$ with another job, the above process. 
Notice that the above transformation always returns a basic feasible solution and reduces the total number of shared $j$-slow machines for a job $j$ by one. Therefore, by applying the transformation at most $|\J|+|\M|$ times, one can get a basic feasible solution that satisfies the required property. Finally, by the same argument it follows that for any job $j \in \J$ that shares a $j$-slow machine $i_j$ with other jobs, then $i_j$ is the slowest that $j$ is assigned to. In the opposite case, assuming that there is another $j$-slow machine $i'$ such that $s_{i_j} > s_{i'}$, repeated application of the above transformation would either move the total $x_{i',j}$ mass from $i'$ to $i_j$, or replace the assignment of all the shared jobs from $i_j$ to $i'$.
\end{proof}

Let $x$ be an extreme point solution of [LP(C)] that satisfies the property of Proposition~\ref{uniform:bfs} and let $\mathcal{G}(x)$ a properly oriented pseudoforest. By the above proposition, each job $j$ has at most three types of assignments in $\mathcal{G}(x)$, regarding its set $T(j)$ of children-machines: (i) $j$-fast machines $F_j \subseteq T(j)$, (ii) {\em exclusive} $j$-slow machines $D_j \subseteq T(j)$, i.e. $j$-slow machines that are completely assigned to $j$, and (iii) at most one {\em shared} $j$-slow machine $i_{j} \in T(j)$ (which is the slowest machine that $j$ is assigned to).

We now describe the rounding scheme for the special case of uniform machines.

\begin{enumerate}
    \item For any job $j \in \J$ such that $x_{p(j),j} \geq \frac{1}{2}$, $j$ is assigned to its parent-machine $p(j)$.
    \item For any job $j \in \J$ such that such that $x_{p(j),j} < \frac{1}{2}$, $j$ is assigned to a subset $S \subseteq T(j)$, according to the following rule:
        \begin{enumerate}
            \item If $F_j \neq \emptyset$, assign $j$ to any $i \in F_j$, \label{uniform:caseb1}
            \item else if $\sigma_j(D_j) \geq \frac{r_{i,j} f_j(r_{i,j})}{3 C}$, then assign $j$ only to the machines of $D_j$ (but not to the shared $i_{j}$). \label{uniform:caseb2}
            \item In any other case, $j$ is assigned to $D_j \cup \{i_j\}$. \label{uniform:caseb3}
        \end{enumerate}
    \item As usual, the jobs that are assigned to more than one machines are placed at the beginning of the schedule, followed by the rest of the jobs.
\end{enumerate}

Let $\J^{(1)}$ be the set of jobs that are assigned by the algorithm to their parent-machines and $\J^{(2)} = \J \setminus \J^{(1)}$ be the rest of the jobs. 
By using the same analysis as in Proposition \ref{speed:parent}, we can see that the total load of any machine $i \in \M$ that processes only jobs from $\J^{(1)}$ is at most $2C$.
Since any machine can process at most one job from $\J^{(2)}$, it suffices to focus on the makespan of the set $T(j)$ for any job $j \in \J^{(2)}$.

Clearly, for the case \ref{uniform:caseb1}, the processing time of a job $j$ on any machine $i \in F_j $ is at most $C$, which, in combination with Proposition \ref{speed:parent}, gives a total load of at most $3C$. 
Moreover, for the case \ref{uniform:caseb2}, where a job $j$ is scheduled on the machines of $D_j \subseteq T(j)$, the total load of every $i \in D_j $ is at most $3C$. This follows by a simple application of Fact \ref{fact:technical}, since $\sigma_j(D_j) \geq \frac{r_{i,j} f_j(r_{i,j})}{3 C} \geq \frac{\gamma_j f_j(\gamma_j)}{3 C}$, and the fact that the machines of $D_j$ are exclusively assigned to $j$. 

Finally, consider case \ref{uniform:caseb3} of the algorithm, where $\sigma_j(D_j) < \frac{r_{i,j} f_j(r_{i,j})}{3 C}$ and $j$ is scheduled on $(D_j \cup \{i_j\})$. By constraints \eqref{lp-makespan} and the fact that $x_{i_j, j} + \sum_{i \in D_j} x_{i,j} > \frac{1}{2}$, it follows that $\sigma_j(D_j\cup \{i_j\}) \geq f_j(r_{i,j}) \frac{r_{i,j}}{2 C}$, which by Fact \ref{fact:technical} implies that $f_j(D_j\cup \{i_j\}) \leq 2C$. Notice that the above analysis implies the existence of a shared $i_j$ machine, if the time the algorithm reaches case \ref{uniform:caseb3}.

In order to complete the proof, it suffices to show that the load of the machine $i_j$ (i.e. the only machine of $T(j)$ that is shared with other jobs of $\J^{(1)}$) is at most $C$. 
Let $\ell_i = \sum_{j' \in \J^{(1)}|i=p(j)} f(r_{i,j'}) \frac{r_{i,j'}}{s_{i,j'}} x_{i,j'}$ be the fractional load of a machine $i$ due to the jobs of $\J^{(1)}$. 
Clearly, if $i_j$ is the only child-machine of $j$, then it has to be that $x_{i_j,j}>\frac{1}{2}$ and the total load of $i_j$ after the rounding is at most $2 \ell_{i_j} + f_j(s_{i_j}) \leq 2 \ell_i + 2  f_j(s_{i_j}) x_{i_j,j} \leq 2 \ell_i + 2  f_j(r_{i,j}) \frac{r_{i,j}}{s_{i_j}} x_{i_j,j} \leq 2C$, using Proposition \ref{speed:parent} and Fact \ref{fact:technical}. 
Assuming that $D_j \neq \emptyset$, by summing over constraints \eqref{lp-makespan} for all $i \in D_j$ and using the fact that $\sigma_j(D_j) < \frac{r_{i,j} f_j(r_{i,j})}{3 C}$, we get that $\sum_{i \in D_j} x_{i,j} < \frac{1}{3}$. Therefore, combining this with the fact that $x_{i_j,j} + \sum_{i \in D_j} x_{i,j} > \frac{1}{2}$, we conclude that $x_{i_j,j} > \frac{1}{6}$. 
Now, for the fractional load $\ell_{i_j}$ of $i_j$ due to the jobs of $\J^{(1)}$, by constraints \eqref{lp-makespan}, we have that $\ell_i \leq C - f_j(r_{i,j}) \frac{r_{i,j}}{s_{i_j}} x_{i_j,j} \leq C - f_j(r_{i,j}) \frac{3 r_{i,j} C}{6 r_{i,j} f_j(r_{i,j})} \leq \frac{C}{2}$, where we used that $x_{i_j,j} > \frac{1}{6}$ and the fact that $i_j$ is the smallest machine in $T(j)$ and, thus, $s_{i_j} \leq \sigma_j (D_j) < \frac{r_{i,j} f_j(r_{i,j})}{3 C}$. By Proposition \ref{speed:parent}, the load of $i_j$ due to the jobs of $\J^{(1)}$ after the rounding is at most $2 \ell_{i_j} \leq C$. Therefore, in every case, the load of any machine $i \in T(j)$ for all $j \in \J^{(2)}$ is at most $3C$, which leads to the following theorem.

\restateUniform* 

\section{Extension: Sparse allocations via $p$-norm regularization}
\label{sec:effective-speed}
In the model of speed-implementable processing time functions, each function $f_j(S)$ depends on the total additive speed, yet is oblivious to the actual number of allocated machines. However, the overhead incurred by the synchronization of physical machines naturally depends on their number. In this section, we study an extension of the speed-implementable malleable model, that captures the impact of the cardinality of a set of machines through the notion of {\em effective speed}. In this setting, every job $j$ is associated with a speed {\em regularizer} $p_j \geq 1$, while the total speed of a set $S \subseteq \M$ is given by: $\sigma^{(p_j)}_j(S) = \left( \sum_{i \in S} s^{p_j}_{i,j} \right)^{\frac{1}{p_j}}$. For simplicity, we assume that every job has the same speed regularizer.

Clearly, the choice of $p$ controls the effect of the cardinality of a set to the resulting speed of an allocation, given that as $p$ increases a {\em sparse} (small cardinality) set has higher effective speed than a non-sparse set of the same total speed. Notice that for $p = 1$, we recover the standard case of additive speeds, while for $p \to \infty$, parallelization is no longer helpful as $\lim_{p \to \infty}\sigma^{(p)}_j(S) = \max_{i \in S}\{s_{i,j}\}$. As before, the processing time functions satisfy the standard properties of malleable scheduling, i.e., $f_j(s)$ is non-increasing while $f_j(s) \cdot s$ is non-decreasing in the total allocated speed.
For simplicity of presentation we assume that all jobs have the same regularizer~$p$, i.e., $p = p_j, \forall j \in \J$, but we comment on the case of job-dependent regularizers at the end of this section.

Quite surprisingly, we can easily modify the algorithms of the previous section in order to capture the above generalization. Given a target makespan $C$, we start from a new feasibility program [LP$^{(p)}$(C)], which is given by constraints \eqref{lp-assign}, \eqref{lp-posit} of [LP(C)], combined with: 

\begin{ceqn}
\begin{align}
\sum_{j \in \J} f_j(r_{i,j}) \left( \frac{r_{i,j}}{s_{i,j}} \right)^p x_{i,j} \leq C, \forall i \in \M. \label{plp-makespan}
\end{align}
\end{ceqn}
Note that $\gamma_j(C)$ and $r_{i,j} = \max\{\gamma_j(C),s_{i,j}\}$ are defined exactly as before. As we can see, the only difference between $[\text{LP}(C)]$ and $[\text{LP}^{(p)}(C)]$ is that we replace each coefficient $f_j(r_{i,j})\frac{r_{i,j}}{s_{i,j}}$ with $\left( f_j(r_{i,j}) \frac{r_{i,j}}{s_{i,j}} \right)^p$ in constraints \eqref{lp-makespan} of the former. As we show in the following proposition, for any $C \geq \OPT$, where $\OPT$ is the makespan of an optimal schedule, [LP$^{(p)}$(C)] has a feasible solution.

\begin{proposition}
For every $C \geq \OPT$, where $\OPT$ is the makespan of an optimal schedule, [LP$^{(p)}$(C)] has a feasible solution.
\end{proposition}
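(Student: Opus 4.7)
The plan is to mimic the proof of Proposition~\ref{speed:lowerbound}, but with a feasible solution tailored to the $p$-norm speed aggregation. Fix an optimal schedule and let $S_j \subseteq \M$ be the set of machines allocated to job $j$, so that $f_j(\sigma^{(p)}_j(S_j)) \leq \OPT \leq C$. I would define
\begin{align*}
x_{i,j} := \left(\frac{s_{i,j}}{\sigma^{(p)}_j(S_j)}\right)^{p} \text{ if } i \in S_j, \quad x_{i,j} := 0 \text{ otherwise.}
\end{align*}
Since $\sigma^{(p)}_j(S_j)^p = \sum_{i \in S_j} s_{i,j}^p$, constraints \eqref{lp-assign} are immediate and constraints \eqref{lp-posit} are trivial.

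The only real work is verifying \eqref{plp-makespan} for each machine $i$. Fix $i \in \M$ and $j \in \J$ with $i \in S_j$, and write $\sigma := \sigma^{(p)}_j(S_j)$ for brevity. Note that $\sigma \geq \gamma_j$ because $f_j(\sigma) \leq C$ and $\gamma_j$ is the smallest speed achieving processing time at most $C$; also $\sigma \geq s_{i,j}$ by definition of the $p$-norm. Then, in both cases $j \in J^+_i$ and $j \in J^-_i$, the coefficient of $x_{i,j}$ in \eqref{plp-makespan} has the form $f_j(q) (q/s_{i,j})^p$ for some $q \in \{s_{i,j}, \gamma_j\}$ with $q \leq \sigma$, so the contribution of $j$ to the load of $i$ equals
\begin{align*}
f_j(q) \left(\frac{q}{s_{i,j}}\right)^{p} \left(\frac{s_{i,j}}{\sigma}\right)^{p} \;=\; f_j(q) \left(\frac{q}{\sigma}\right)^{p}.
\end{align*}
Applying Fact~\ref{fact:technical}(2) with $q \leq \sigma$ yields $f_j(q) \leq (\sigma/q) f_j(\sigma)$, and therefore
\begin{align*}
f_j(q) \left(\frac{q}{\sigma}\right)^{p} \;\leq\; f_j(\sigma) \left(\frac{q}{\sigma}\right)^{p-1} \;\leq\; f_j(\sigma),
\end{align*}
where the last inequality uses $p \geq 1$ and $q/\sigma \leq 1$.

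Summing over all $j$ with $i \in S_j$, the total load of $i$ in the LP is bounded by $\sum_{j : i \in S_j} f_j(\sigma^{(p)}_j(S_j)) \leq \OPT \leq C$, as needed. The only subtle point---and therefore the step I would double-check most carefully---is the comparison $\sigma^{(p)}_j(S_j) \geq \gamma_j$ in the case $j \in J^-_i$, which is precisely what makes Fact~\ref{fact:technical}(2) applicable with $q = \gamma_j$; everything else is a direct generalization of the argument used in the $p=1$ proof.
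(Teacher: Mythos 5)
Your proposal is correct and matches the paper's own proof: the same assignment $x_{i,j} = \bigl(s_{i,j}/\sigma^{(p)}_j(S_j)\bigr)^p$, the same application of Fact~\ref{fact:technical}(2) to bound each job's contribution by $f_j(S_j)\,(q/\sigma)^{p-1} \leq f_j(S_j)$, and the same summation to $\OPT \leq C$; you merely unify the $J^+_i$ and $J^-_i$ cases that the paper treats separately. The subtlety you flag ($\sigma^{(p)}_j(S_j) \geq \gamma_j$ despite $\gamma_j$ being defined as a minimum over integers while the $p$-norm need not be integral) is present, unremarked, in the paper's proof as well.
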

\begin{proof}
Consider an optimal solution of makespan $\OPT$, where every job $j$ is assigned to a subset of the available machines $S_j \subseteq \M$. Setting $x_{i,j} = \frac{s^p_{i,j}}{\sigma^p_j(S_j)}$ if $i \in S_j$ and $x_{i,j} = 0$, otherwise, we get a feasible solution that satisfies inequalities \eqref{lp-assign}, \eqref{plp-makespan}, \eqref{lp-posit}. Indeed, for constraints \eqref{lp-assign} we have $\sum_{i \in \M} x_{i,j} = \sum_{i \in S_j} \frac{s^p_{i,j}}{(\sigma^{(p)}_j(S_j))^p} = 1$. Moreover, for every $i \in S_j$ and $j \in \J$, it is true that:
\begin{ceqn}
\begin{align*}
    f_j(r_{i,j}) \left( \frac{r_{i,j}}{s_{i,j}} \right)^p x_{i,j} &= f_j(r_{i,j})  \left( \frac{r_{i,j}}{s_{i,j}} \right)^p \left(\frac{s_{i,j}}{\sigma^{(p)}_j(S_j)} \right)^p \\
    &\leq \left(\frac{r_{i,j}}{\sigma^{(p)}_j(S_j)} \right)^{p} f_j(r_{i,j}) \\
    &\leq \left(\frac{r_{i,j}}{\sigma^{(p)}_j(S_j)} \right)^{p-1} f_j(S_j) \\
    &\leq f_j(S_j),
\end{align*}
\end{ceqn}
where in the second inequality, we use Fact \ref{fact:technical}, since $r_{i,j} = \max\{s_{i,j}, \gamma_j(C)\} \leq \sigma^{(p)}_j(S_j)$. Moreover, we use the fact that $\frac{r_{i,j}}{\sigma^{(p)}_j(S_j)} \leq 1$ for all $j \in \J$ and $i \in S_j$. 
By the above analysis, we can see that constraints \eqref{plp-makespan} are satisfied, since for all $i \in \M$, we have: 
\begin{ceqn}
\begin{align*}
\sum_{j \in \J} f_j(r_{i,j}) \left( \frac{r_{i,j}}{s_{i,j}} \right)^p x_{i,j} \leq \sum_{j \in J|~i \in S_j } f_j(S_j) \leq \OPT \leq C.
\end{align*}
\end{ceqn}
\end{proof}

The algorithm for this setting is similar to the one of the standard case (see Section \ref{sec:rounding:simple}) and is based on rounding a feasible extreme point solution of $[LP^{(p)}(C)]$. Moreover, the rounding scheme is a parameterized version of the simple rounding of Section~\ref{sec:rounding:simple}, with the difference that the threshold parameter $\beta \in [0,1]$ (i.e., the parameter that controls the decision of assigning a job $j$ to either $p(j)$ or $T(j)$) is not necessarily $\frac{1}{2}$. In short, given a pseudoforest $\mathcal{G}(x)$ on the support of a feasible solution $x$, the rounding scheme assigns any job $j$ to $p(j)$ if $x_{p(j),j} \geq \beta$, or to $T(j)$, otherwise.

\begin{proposition}
Any feasible solution of  [LP$^{(p)}$(C)] can be rounded in polynomial-time into a feasible schedule of makespan at most $\left( \frac{1}{\beta} + \frac{1}{(1-\beta)^{1/p}} \right)C$.
\end{proposition}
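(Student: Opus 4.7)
The plan is to generalize the simple 4-approximation analysis of Section~\ref{sec:rounding:simple} (Propositions~\ref{speed:parent} and~\ref{speed:children}) in two directions: replacing the threshold $1/2$ by a general $\beta\in(0,1)$, and replacing the $L_1$ coefficient $\gamma_j/s_{i,j}$ by the $p$-norm coefficient $(\gamma_j/s_{i,j})^p$. Following the scheme described at the end of Section~\ref{sec:overview}, I would compute an extreme point $x$ of $[\mathrm{LP}^{(p)}(C)]$, build a properly oriented pseudoforest $\mathcal{G}(x)$, and partition the jobs into $\J^{(1)}:=\{j\suchthat x_{p(j),j}\geq\beta\}$ (assigned to $p(j)$) and $\J^{(2)}:=\J\setminus\J^{(1)}$ (assigned to $T(j)$). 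As before, every machine $i$ can be a child of at most one job in $\J^{(2)}$, so its total load will be bounded by the load contributed by jobs of $\J^{(1)}_i$ plus at most one term $f_j(T(j))$ with $i\in T(j)$.

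For the parent contribution, I would show an analogue of Proposition~\ref{speed:parent}: $\sum_{j\in\J^{(1)}_i} f_j(\{i\})\leq \tfrac{1}{\beta}\,\ell_i\leq \tfrac{1}{\beta}C$. The key step is that whenever $j\in J^-_i$, we have $s_{i,j}<\gamma_j$, so $\gamma_j/s_{i,j}\geq 1$ and hence $(\gamma_j/s_{i,j})^p\geq \gamma_j/s_{i,j}$; combined with Fact~\ref{fact:technical}(2), this gives $f_j(s_{i,j})\leq (\gamma_j/s_{i,j})^p f_j(\gamma_j)$, so the coefficients in $[\mathrm{LP}^{(p)}(C)]$ dominate the actual single-machine processing time. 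The factor $1/\beta$ then comes from $x_{i,j}\geq\beta$ on $\J^{(1)}_i$, and the final inequality is just constraint~\eqref{plp-makespan}.

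For the children contribution, I would adapt Proposition~\ref{speed:children}. If some $i\in T(j)$ satisfies $j\in J^+_i$, monotonicity of $f_j$ together with $\sigma^{(p)}_j(T(j))\geq s_{i,j}$ gives $f_j(T(j))\leq C$. Otherwise all children are $j$-slow, and constraint~\eqref{plp-makespan} yields $\tfrac{f_j(\gamma_j)\gamma_j^p}{C}\,x_{i,j}\leq s_{i,j}^p$ for every $i\in T(j)$. Summing over $T(j)$ and using $\sum_{i\in T(j)} x_{i,j}>1-\beta$ gives
\[
\bigl(\sigma^{(p)}_j(T(j))\bigr)^p \;\geq\; (1-\beta)\,\frac{\gamma_j^p\, f_j(\gamma_j)}{C}.
\]
Fact~\ref{fact:technical}(2) then yields
\[
f_j(T(j))\;\leq\; \frac{\gamma_j}{\sigma^{(p)}_j(T(j))}\,f_j(\gamma_j)\;\leq\; \frac{f_j(\gamma_j)^{\,1-1/p}\,C^{1/p}}{(1-\beta)^{1/p}}.
\]

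Adding the two contributions yields makespan at most $\bigl(\tfrac{1}{\beta}+\tfrac{1}{(1-\beta)^{1/p}}\bigr)C$, as required. The main subtlety — and what I expect to be the most delicate step — is the last bound above: one must invoke $f_j(\gamma_j)\leq C$ (from the definition of the critical speed $\gamma_j$) to absorb the $f_j(\gamma_j)^{1-1/p}$ factor into $C^{1-1/p}$, which is where the $p\geq 1$ hypothesis is genuinely used. Everything else is a direct, parameterized rewriting of the $L_1$ proof, and the algorithm is clearly polynomial given an oracle for $f_j$ and a polynomial-time procedure for solving $[\mathrm{LP}^{(p)}(C)]$.
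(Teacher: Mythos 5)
Your proposal is correct and follows essentially the same route as the paper's proof: the same $\beta$-threshold partition into $\J^{(1)}$ and $\J^{(2)}$, the same coefficient-domination argument giving the $\frac{1}{\beta}C$ parent bound, and the same summation of constraints~\eqref{plp-makespan} over $T(j)$ to lower-bound $\sigma^{(p)}_j(T(j))$, with the $f_j(\gamma_j)\leq C$ absorption step handled by equivalent algebra. The only cosmetic difference is that you dispatch the case of a $j$-fast child machine separately, whereas the paper folds it into the uniform coefficient bound; both are valid.
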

\begin{proof}
In order to prove the upper bound on the makespan of the produced schedule, we work similarly to the proof of Section~\ref{sec:rounding:simple}.
Let $\J^{(1)}$ be the set of jobs that are assigned by our algorithm to their parent-machine and let $\J^{(2)} = \J \setminus \J^{(1)}$ be the rest of the jobs.

We first show that the total load of any machine $i \in \M$ incurred by the jobs of $\J^{(1)}$ is at most $\frac{1}{\beta}C$. By definition of our algorithm, for every job $j \in \J^{(1)}$ that is assigned to some machine $i$, it holds that $x_{i,j}\geq \beta$. For the load of any machine $i \in \M$ in the rounded schedule that corresponds to jobs of $\J^{(1)}$, we have: 
\begin{ceqn}
\begin{align*}
\sum_{j \in \J^{(1)}|i = p(j)} f_j(s_{i,j}) &= \frac{1}{\beta}\sum_{j \in \J^{(1)}|i = p(j)} f_j(s_{i,j}) \beta \\
&\leq \frac{1}{\beta}\sum_{j \in \J^{(1)}|i = p(j)} f_j(s_{i,j}) x_{i,j} \\
&\leq \frac{1}{\beta}\sum_{j \in \J^{(1)}|i = p(j)} f_j(r_{i,j}) \frac{r_{i,j}}{s_{i,j}} x_{i,j}\\
&\leq \frac{1}{\beta}\sum_{j \in \J^{(1)}|i = p(j)} f_j(r_{i,j}) \left(\frac{r_{i,j}}{s_{i,j}}\right)^p x_{i,j}\\
&\leq \frac{1}{\beta} C,
\end{align*}
\end{ceqn}
where the second inequality follows by Fact \ref{fact:technical}, given that $s_{i,j} \leq r_{i,j}$. Moreover, the third inequality follows by the fact that $\frac{r_{i,j}}{s_{i,j}} \geq 1$ and, thus, $\frac{r_{i,j}}{s_{i,j}} \leq \left(\frac{r_{i,j}}{s_{i,j}}\right)^p$. Finally, the last inequality follows by constraints \eqref{plp-makespan}.

The next step is to prove that every job $j \in \J^{(2)}$ has a processing time of at most $(\frac{1}{1-\beta})^{\frac{1}{p}} C$. By definition of the algorithm, for any $j\in \J^{(2)}$ it is the case that $\sum_{i \in T(j)} x_{i,j} > 1-\beta$. 
First notice that for any $j \in \J^{(2)}$ and $i \in T(j)$, by constraints \eqref{plp-makespan}, we have that $f_j(r_{i,j}) \left( \frac{r_{i,j}}{s_{i,j}} \right)^p x_{i,j} \leq C$, which is equivalent to $f_j(r_{i,j})  r_{i,j}^p x_{i,j} \leq s_{i,j}^p C$. By summing the previous inequalities over all machines $i \in T(j)$, for any job $j \in \J^{(2)}$, we get: 

\begin{ceqn}
\begin{align*}
    \left(1-\beta \right) f_j(r_{i,j})  r_{i,j}^p &\leq f_j(r_{i,j})  r_{i,j}^p \sum_{i \in T(j)}x_{i,j} \\
    &\leq \left(\sigma^{(p)}_j(T(j))\right)^p C.
\end{align*}
\end{ceqn}
By the above analysis, we get that $\left(\sigma^{(p)}_j(T(j))\right)^p \geq \left(1-\beta \right) \frac{f_j(r_{i,j})}{C} r_{i,j}^p$ and, thus, 
\begin{ceqn}
\begin{align*}
 \sigma^{(p)}_j(T(j)) &\geq \left(\left(1-\beta \right) \frac{f_j(r_{i,j})}{C} \right)^{\frac{1}{p}}r_{i,j} \\
 &\geq \left(\left(1-\beta \right) \frac{f_j(r_{i,j})}{C} \right)^{\frac{1}{p}} \gamma_{j},
\end{align*}
\end{ceqn}
where the last inequality follows by definition of $r_{i,j}$. Finally, by using Fact \ref{fact:technical} and the fact that $\left( \frac{C}{f_j(\gamma_j)} \right)^{\frac{1}{p}} \leq \frac{C}{f_j(\gamma_j)}$ since $p \geq 1$, we get that $f_j(\sigma^{(p)}_j(T(j)) \leq \left(\frac{1}{1-\beta}\right)^{\frac{1}{p}} C$. As previously, the proof of the proposition follows by the fact that, by definition of our algorithm, each machine $i \in \M$ processes at most one job from $\J^{(2)}$.
\end{proof}

It is not hard to see that, given $p$, the algorithm can initially compute a threshold $\beta \in [0,1]$ that minimizes the above theoretical bound. Clearly, for $p=1$ the minimizer of the expression is $\beta = 1/2$, yielding the $4$-approximation of the standard case, while for $p \to +\infty$ one can verify that $\beta \to 1$ and:
\begin{ceqn}
\begin{align*}
\lim_{p \to +\infty} \inf_{\beta \in [0,1]} \left( \frac{1}{\beta} + \frac{1}{(1-\beta)^{1/p}} \right) = 2.
\end{align*}
\end{ceqn} As expected, for the limit case where $p \to +\infty$, our algorithm converges to the well-known algorithm by Lenstra et al. \cite{LST90}, given that our problem becomes non-malleable. By using the standard approximation $\beta = 1 - \frac{\ln (p)}{p}$ for $p \geq 2$, the following theorem follows directly.
\begin{theorem}
Any feasible solution of [LP$^{(p)}$(C)] for $p \geq 2$ can be rounded in polynomial-time into a feasible schedule of makespan at most $\left(\frac{p}{p- \ln (p)} + \sqrt[p]{\frac{p}{\ln(p)}} \right)C$.
\end{theorem}
Note that an analogous approach can handle the case where jobs have different regularizers, with the approximation ratio for this scenario determined by the smallest regularizer that appears in the instance (note that the approximation factor is always at most $4$).

\section{Hardness results and integrality gaps}\label{sec:hardness}
We are able to prove a hardness of approximation result for the case of scheduling malleable jobs on unrelated machines. 

\restateHardness*
\begin{proof}
Following a similar construction as in \cite{CM15} for the case of splittable jobs, we prove the APX-hardness of the malleable scheduling problem on unrelated machines by providing a reduction from the \textsc{max-k-cover} problem: Given a universe of elements $\mathcal{U}=\{e_1, \dots , e_m\}$ and a family of subsets $S_1,\dots, S_n \subseteq \mathcal{U}$, find $k$ sets that maximize the number of covered elements, namely, maximize $\{\left| \bigcup_{i \in I} S_i \right| \suchthat I \subset [n], |I| = k\}$. In \cite{F98}, Feige shows that it is NP-hard to distinguish between instances such that all elements can be covered with $k$ disjoint sets and instances where no $k$ sets can cover more than a $(1 - \frac{1}{e} ) + \epsilon'$ fraction of the elements, for any $\epsilon' > 0$. In addition, the same hardness result holds for instances where all sets have the same cardinality, namely $\frac{m}{k}$. Note that in the following reduction, while we allow for simplicity the speeds to take rational values, the proof stays valid under appropriate scaling.

Given a \textsc{max-k-cover} instance, where each set has cardinality $\frac{m}{k}$, we construct an instance of our problem in polynomial time as follows: We consider $n$ jobs, one for each set $S_j$, and we define the processing time of each job to be $f_{j}(S) = \max\{\frac{1}{\sigma_j(S)}, 1\}$. It is not hard to verify that $f_j(S)$ is non-increasing and $\sigma_j(S)f_j(S)$ is non-decreasing in the total allocated speed. We consider a set $\mathcal{P}$ of $n-k$ {\em common-}machines, such that $s_{i,j} = 1$ for $j \in \J$ and $i \in \mathcal{P}$. Moreover, for every element $e$, we consider an {\em element-}machine $i_e$ such that $s_{i_e,j} = \frac{k}{m}$ if $e \in S_j$ and $s_{i_e,j} = 0$, otherwise. In the following, we fix $\epsilon>0$ such that $\frac{1}{1-\frac{1}{e} + \epsilon'} = \frac{e}{e-1} - \epsilon$.

Consider the case where all elements can be covered by $k$ disjoint sets. Let $\mathcal{C}$ be the family of sets in a cover. In that case, we can assign to each job $j$ such that $S_j \in \mathcal{C}$ the element-machines that correspond to $S_j$. Clearly, every such job allocates $\frac{m}{k}$ machines of speed $s_{i,j} = \frac{k}{m}$, thus, receiving a total speed of one. The rest of the $n-k$ jobs, can be equally distributed to the $n-k$ common-machines, yielding a total makespan of $\OPT = 1$. 
On the other hand, consider the case where no $k$ sets can cover more than a $(1 - \frac{1}{e} ) + \epsilon'$ fraction of the elements. In this case, we can choose any $n-k$ jobs and assign them to the common-machines with processing time exactly $1$. Notice that since $f_j(S) \geq 1$, every (common- or element-) machine can be allocated to at most one job (otherwise the makespan becomes at least $2$). Given that exactly $n-k$ jobs are scheduled on the common machines, we have $k$ jobs to be scheduled on the $m$ element-machines. Aiming for a schedule of makespan at most $\frac{e}{e-1} - \epsilon$, each of the $k$ jobs that are processed by the element-machines should allocate at least $1-\frac{1}{e} + \epsilon'$ speed, that is, at least $\frac{m}{k}(1-\frac{1}{e} + \epsilon')$ element-machines. Therefore, we need at least $m (1 - \frac{1}{e} + \epsilon')$ machines in order to schedule the rest of the jobs within makespan $\frac{e}{e-1} - \epsilon$. However, by assumption on the instance of the \textsc{max-k-cover}, for any choice of $k$ sets, at most $(1 - \frac{1}{e} + \epsilon') m $ machines can contribute non-zero speed, which leads to a contradiction. Therefore, given a $(\frac{e}{e-1} - \epsilon)$-approximation algorithm for the problem of scheduling malleable jobs on unrelated machines, we could distinguish between the two cases in polynomial-time.
\end{proof}

Notice that the above lower bound $\frac{e}{e-1}$ is strictly larger than the well-known $1.5$-hardness for the standard (non-malleable) scheduling problem on unrelated machines. To further support the fact that the malleable version of the problem is harder than its non-malleable counterpart, we provide a pseudopolynomial transformation of the latter to malleable scheduling with speed-implementable processing times. 

\begin{theorem}
There exists a pseudopolynomial transformation of the standard problem of makespan minimization on unrelated machines to the problem of malleable scheduling with speed-implementable processing times.
\end{theorem}
\begin{proof}
Consider an instance of the problem of scheduling non-malleable jobs on unrelated machines. We are given a set of machines $\M$ and a set of jobs $\J$ as well as processing times $p_{i,j} \in \mathbb{Z}_+$ for each $i \in \M$ and each $j \in \J$, with the goal of finding an assignment minimizing the makespan.
We create an equivalent instance of malleable scheduling on unrelated machines on the same set of machines and jobs by defining the processing time functions and speeds as follows:
Let $p_{\max} := \max_{i \in \M, j \in \J} p_{i,j}$.
For $j \in \J$ define $f_j(s) := p_{\max} s^{-\frac{1}{p_{\max}|\J||\M|}}$ and for each $i \in \M$ define $s_{i,j} := \left(\frac{p_{\max}}{p_{i,j}}\right)^{p_{\max}|\J||\M|}$.

Note that $f_j(\{i\}) = p_{i,j}$. 
Furthermore it is easy to verify that the functions $f_j$ fulfill the monotonic workload requirement and that 
$f_j(S) > \min_{i \in S} p_{i, j} - \frac{1}{|\J|}$
for any $S \subseteq \M$.
Therefore, any solution to the non-malleable problem corresponds to a solution of the malleable problem with the same makespan by running each job on the single machine it is assigned to.
Conversely, any solution to the malleable problem induces a solution of the non-malleable problem by running each job only on the fastest machine it is assigned to, increasing the makespan by less than $1$. Since the optimal makespan of the non-malleable instance is integer, an optimal solution of the malleable instance induces an optimal solution of the non-malleable instance.

Note that the encoding lengths of the speed values are pseudopolynomial in the size of the encoding of the original instance. However, by applying standard rounding techniques to the original instance, we can ensure that the constructed instance has polynomial size in trade for a mild loss of precision. 
\end{proof}
Notice that the above reduction can be rendered polynomial by standard techniques, preserving approximation factors with a loss of $1+\varepsilon$. Finally, a simpler version of the above reduction becomes polynomial in the strong sense in the case of restricted identical machines. In brief, for any job $j \in \J$, we define $f_j(s) = p_j \max\{\frac{1}{s},1\}$, where $p_j$ is the processing time of the job in the non-malleable instance. Moreover, we set $s_{i,j} = 1$ for any job $j$ that can be executed on machine $i$ and $s_{i,j} = 0$, otherwise. 

From the side of algorithmic design, we are able to show the following lower bounds on the integrality gap of [LP(C)] for each of the variants we consider:

\begin{theorem}
The integrality gap of [LP(C)] in the case of unrelated machines is lower bounded by $1+ \varphi \approx 2.618$.
\end{theorem}
\begin{proof}
The instance establishing the lower bound follows a similar construction as in~\cite{CM15}.
Note that while in the following proof we allow speeds to take non-integer values, the result also holds for integer-valued speeds under appropriate scaling of the processing time functions. 
We consider a set $\J = \J' \cup \{\hat{j}\}$ of $2k+1$ jobs to be scheduled on a set $\M = \M_A \cup \M_B$ of machines. The set $\J'$ contains $2k$ jobs, each of processing time $f_j(s) = \max\{\frac{1}{s},\frac{\varphi}{2}\}$, where $s$ is the total allocated speed, and these jobs are partitioned into $k$ {\em groups of two}, $J_\ell$ for $\ell \in [k]$. Every group $J_\ell$ of two jobs, is associated with a machine $i^A_\ell$ such that $s_{i^A_\ell,j} = \frac{2}{\varphi}$ for all $j \in J_\ell$ and $s_{i^A_\ell,j} = 0$, otherwise. Let $\M_A$ be the set of these machines and note that $|\M_A| = k$. Moreover, every job of $\J'$ is associated with a {\em dedicated} machine $i^B_j$ such that $s_{i^B_j, j} = 2 - \varphi$, only for job $j$ and $s_{i^B_j, j'} = 0$, otherwise. Let $\M_B$ be the set of these machines. Finally, job $\hat{j}$ has processing time $f_{\hat{j}}(s) = \max\{\frac{1}{s}, 1\}$, while $s_{i,\hat{j}} = 1$ for all $i \in \M_A$ and $s_{i,\hat{j}} = 0$ for all $i \in \M_B$.

In the above setting, it is not hard to verify that the makespan of an optimal solution is $\OPT = 1+\varphi$. Specifically, job $\hat{j}$ uses exactly one machine of $\M_A$ to be executed, given that additional machines cannot decrease its processing time. Let $\hat{i} \in \M_A$ be that machine and let $J_{\hat{\ell}} = \{\hat{j_1}, \hat{j_2}\}$ be the group of two jobs that is associated with $\hat{i}$. Clearly, if at least one of $\hat{j_1}$ and $\hat{j_2}$ is scheduled only on its dedicated machine it is the case that $f_{\hat{j_1}}(2-\varphi) = f_{\hat{j_2}}(2-\varphi) = \frac{1}{2 - \varphi} = 1 + \varphi$. On the other hand, if both $\hat{j_1}$ and $\hat{j_2}$ make use of their $\hat{i}$, then the load of $\hat{i}$ is at least $1 + \varphi$.

Consider the following solution of [LP(C)] for $C = 1 + \frac{1}{k}$. Notice that for $C = 1 + \frac{1}{k}$ and $k \geq 1$, we have that $\gamma_j = \lceil \frac{1}{C} \rceil = \lceil \frac{k}{k+1} \rceil = 1$ for any $j \in \J'$ and $\gamma_{\hat j} = \lceil \frac{1}{C} \rceil = 1$. Therefore, for any $j \in \J'$, we have $r_{i,j} = \max\{\gamma_j, s_{i,j}\} = \frac{2}{\varphi}$, $\forall i \in \M_A$ and $r_{i,j} = 1$, $\forall i \in \M_B$. Finally, we have $r_{i,\hat{j}} = 1$, $\forall i \in \M_A \cup \M_B$.

For each job $j \in \J$ that belongs to the group $J_{\ell}$, we set $x_{i^A_{\ell},j} = \frac{1}{\varphi}$ for the assignment of $j$ to its corresponding machine in $\M_A$ and $x_{i^B_j,j} = \frac{\varphi - 1}{\varphi}$ for the assignment to its dedicated machine of $\M_B$. Moreover, we set $x_{i, \hat{j}} = \frac{1}{k}$ for every $i \in \M_A$. Notice that, for this assignment, constraints \eqref{lp-assign} and \eqref{lp-posit} are trivially satisfied. 

For verifying constraints \eqref{lp-makespan}, for any machine $i \in \M_A$ with corresponding jobs $j_1,j_2 \in \J'$, we have:
\begin{ceqn}
\begin{align*}
f_{\hat{j}}(r_{i,\hat{j}})\frac{r_{i,\hat{j}}}{s_{i,\hat{j}}} x_{i,\hat{j}} &+ f_{j_1}(r_{i,j_1})\frac{r_{i,j_1}}{s_{i,j_1}} x_{i,j_1} + f_{j_2}(r_{i,j_2})\frac{r_{i,j_2}}{s_{i,j_2}} x_{i,j_2} \\
&= x_{i,\hat{j}} + \frac{\varphi}{2} x_{i,j_1} + \frac{\varphi}{2} x_{i,j_2} \\
&= \frac{1}{k} + 1 = C.
\end{align*}
\end{ceqn}
Finally, for any machine $i \in \M_B$ dedicated to a job $j \in \J'$, we have that:
\begin{ceqn}
\begin{align*}
f_{j}(r_{i,j})\frac{r_{i,j}}{s_{i,j}} x_{i,j} = \frac{1}{2-\varphi} x_{i,j} = \frac{\varphi - 1}{\varphi(2-\varphi)} = 1 \leq C,
\end{align*}
\end{ceqn}
where the last equality holds for number $\varphi$. Given the above construction, for $k \to \infty$, there exists an instance such that the integrality gap of [LP(C)] is at least $1 + \varphi$, since:
\begin{ceqn}
\begin{align*}
\lim_{k\to\infty} \frac{\OPT}{C} = \lim_{k\to\infty} \frac{1 + \varphi}{1 + \frac{1}{k}} = 1 + \varphi \approx 2.618.
\end{align*}
\end{ceqn}
\end{proof}

\begin{theorem}
The integrality gap of [LP(C)] in the case of restricted identical machines is lower bounded by $2$.
\end{theorem}
\begin{proof}
We consider a set $\J$ of $k$ identical jobs, each of processing time $f_j(s) = \max\{\frac{2}{s} , 1\}$, where $s$ is the total allocated speed. It is not hard to verify that $f_j$ is monotone non-increasing, while its work $s \max\{\frac{2}{s} , 1\} = \max\{2 , s\}$ is non-decreasing. Every job $i$ can be executed on a {\em dedicated} machine, let $i_j$, while there exists a common pool of $k-1$ machines $\mathcal{P}$ that can be used by any job. Recall that in the restricted assignment case, every feasible pair $(i,j)$ has unit speed. Clearly, the optimal makespan of the above family of instances for $k \geq 1$ is $\OPT = 2$, since there exists at least one job that must be executed only on its dedicated machine. Assuming this is not the case, it has to be that exactly $k$ jobs make use of the common pool of $k-1$ machines, which, by pigeonhole principle, cannot happen within a makespan of at least $\OPT = 2$. 

Consider the following solution of [LP(C)], for $C = \frac{2k}{2k-1} \in (1,2)$: We set $x_{i_j,j} = \frac{k}{2k-1}$ for the assignment of each job to its dedicated machine and $x_{i,j} = \frac{1}{2k-1}$ for every $j \in \J$ and $i \in \mathcal{P}$. According to this assignment, for every job $j \in \J$ we have that $\gamma_j = \lceil \frac{2}{C} \rceil = 1$, while for any machine $i$ we have $r_{i,j} = \max\{s_{i,j}, \gamma_j\} = 1$. Therefore, for the dedicated machine of each job $j$, constraint \eqref{lp-makespan} corresponding to $i_j$ is satisfied with equality since: $f_j(r_{i,j}) r_{i,j} x_{i_j,j} = 2 x_{i_j,j} = \frac{2k}{2k-1} = C$. Moreover, for any machine $i \in \mathcal{P}$, constraints \eqref{lp-makespan} are also satisfied with equality, since,  $\sum_{j \in \J} f_j(r_{i,j}) r_{i,j} x_{i,j} = 2k \frac{1}{2k-1} = C$. Notice, that the above assignment satisfies constraints \eqref{lp-assign} and \eqref{lp-posit}. According to the above construction, for $k \to \infty$, there exists an instance such that the integrality gap of [LP(C)] is at least $2$, since:
\begin{ceqn}
\begin{align*}
\lim_{k\to\infty} \frac{\OPT}{C} = \lim_{k\to\infty} 2 \left(1 - \frac{1}{2k}\right) = 2.
\end{align*}
\end{ceqn} 
Finally, it is not hard to verify that $C = \frac{2k}{2k-1}$ is the smallest $C$ such that [LP(C)] is feasible. Indeed, by summing over constraints \eqref{lp-assign}, we get that $$
k = \sum_{j\in \J} \left(x_{i_j, j} + \sum_{i \in \mathcal{P}} x_{i,j} \right) \leq k\frac{C}{2} + (k-1) \frac{C}{2} \leq (k - \frac{1}{2})C,$$ where the first inequality follows by constraints \eqref{lp-makespan}.
\end{proof}

\begin{theorem}
The integrality gap of [LP(C)] in the case of uniform machines is lower bounded by $2$.
\end{theorem}
\begin{proof}
We consider a set $\J$ of $2k+1$ identical jobs, each of processing time $f_j(s) = \max\{\frac{2}{s} , 1\}$, where $s$ is the total allocated speed. It is not hard to verify that $f_j$ is monotone non-increasing and has non-decreasing work. Moreover, we consider a set $\mathcal{M}_S$ of $2k$ {\em slow} machines of speed $s_i = 1$ for all $i \in \mathcal{M}_S$ and a set $\mathcal{M}_F$ of $k$ {\em fast} machines of speed $s_i = 2$ for all $i \in \mathcal{M}_F$. Clearly, the optimal makespan of the above family of instances for $k \geq 1$ is $\OPT = 2$. By construction of the instance, the optimal makespan is always an integer number. Assuming that $\OPT = 1$, it has to be that every fast machine process at most one job ($k$ jobs in total), while the set of slow machines should process exactly $k$ jobs in pairs of two. By pigeonhole principle, since the number of jobs is $2k+1$, there exist a job that remains to be scheduled and the minimum processing time of this job is one, a contradiction.

Consider the following solution of [LP(C)], for $C = \frac{2k+1}{2k} \in (1,2)$: We set $x_{i,j} = \frac{1}{4k}$ for the assignment of every $j \in \J$ on $i \in \mathcal{M}_S$ and $x_{i,j} = \frac{1}{2k}$ for the assignment of every $j \in \J$ on $i \in \mathcal{M}_F$. For $C = \frac{2k+1}{2k}$, the critical value of any job $j$ becomes $\gamma_j = \lceil \frac{2}{C} \rceil = 2$. Moreover, for any job $j \in \J$ we have $r_{i,j} = \max_{\{\gamma_j,s_i\}} = 1$ for $i \in \mathcal{M}_S$ and $r_{i,j} = \max_{\{\gamma_j,s_i\}} = 2$ for $i \in \mathcal{M}_F$. Clearly, constraints \eqref{lp-assign} and \eqref{lp-posit} of [LP(C)] are satisfied. For this assignment, for every $i \in \mathcal{M}_S$, we have: 
$$\sum_{j \in \J} f_j(r_{i,j}) \frac{r_{i,j}}{s_i} x_{i,j} = 2 \sum_{j \in \J} x_{i,j} = 2 \frac{2k+1}{4k} = C.$$ Moreover, for every $i \in \mathcal{M}_F$, we get: 
$$\sum_{j \in \J} f_j(r_{i,j}) \frac{r_{i,j}}{s_i} x_{i,j} = \sum_{j \in \J} x_{i,j} = \frac{2k+1}{2k} = C.$$
Therefore, constraints \eqref{lp-makespan} are satisfied for every $i \in \mathcal{M}_S \cup \mathcal{M}_F$. Given the above construction, for $k \to \infty$, there exists an instance such that the integrality gap of [LP(C)] is at least $2$, since:
\begin{ceqn}
\begin{align*}
\lim_{k\to\infty} \frac{\OPT}{C} = \lim_{k\to\infty} 2 \left(1 - \frac{1}{2k}\right) = 2.
\end{align*}
\end{ceqn} 
Finally, it is not hard to verify that $C = \frac{2k}{2k-1}$ is the smallest $C$ such that [LP(C)] is feasible. Indeed, by summing constraints \eqref{lp-assign} over all $j \in \J$, we get that: $2k+1 = \sum_{j \in \J} \left( \sum_{i\in \mathcal{M}_S} x_{i, j} + \sum_{i\in \mathcal{M}_S} x_{i,j} \right) \leq 2k\frac{C}{2} + k C = 2k C$, where the inequality follows by using constraints \eqref{lp-makespan}.
\end{proof}

\section{The case of supermodular processing time functions}
\label{sec:supermodular}
In this paper we concentrated our study on speed-imple\-mentable processing time functions.
However, the general definition of malleable scheduling, as given in Section~\ref{sec:intro}, leaves room for many other possible variants of the problem with processing times given by monotone non-increasing set functions.
One natural attempt of capturing the assumption of non-decreasing workload is to assume that, for each job $j \in \mathcal{J}$, the corresponding processing time function $f_j$ is supermodular, i.e.,
$$f_j(T \cup \{i\}) - f_j(T) \geq f_j(S \cup \{i\}) - f_j(S)$$
for all $S \subseteq T \subseteq \mathcal{M}$ and $i \in \mathcal{M} \setminus T$.
The interpretation of this assumption is that the decrease in processing time when adding machine $i$ diminishes the more machines are already used for job $j$ (note that the terms on both sides of the inequality are non-positive because $f_j$ is non-increasing).
For this setting, which we refer to as {\em generalized malleable scheduling with supermodular processing time functions}, we derive a strong hardness of approximation result.
\begin{theorem}\label{super:inapprox}
There is no $|\mathcal{J}|^{1-\varepsilon}$-approximation for generalized malleable scheduling with supermodular processing time functions, unless $P = NP$.
\end{theorem}
\begin{proof}
We show this by reduction from \textsc{graph coloring}: Given a graph $G = (V, E)$, what is the minimum number of colors needed to color all vertices such that no to adjacent vertices have the same color? It is well-known that this problem does not admit a $|V|^{1 - \varepsilon}$-approximation unless $P = NP$~\cite{feige1998zero}.
  
Given a graph $G = (V, E)$, we introduce a job $j_v$ for each $v \in V$ and a machine $i_e$ for each $e \in E$.
For each $v \in V$, let $\delta(v)$ be the set of incident edges and define the corresponding set of machines $S_v := \{i_e : e \in \delta(v)\}$.
We define the processing time function of job $j_v \in \mathcal{J}$ by $$f_{j_v}(S) := 1 + |V| |S_{v} \setminus S|.$$
It is easy to verify that these functions are non-increasing and supermodular.
We show that the optimal makespan for the resulting instance of generalized malleable scheduling is equal to the minimum number of colors needed to color the graph $G$.
  
First assume that $G$ has a coloring with $k$ colors. 
We create a schedule with makespan at most $k$ as follows.
Arbitrarily label the colors $\{0, \dots, k-1\}$ and let $c(v)$ be the color of vertex $v \in V$. 
For each $v \in V$, start job $j_v$ on the set of machines $S_v$ at time $c(v)$.
Because $f_{j_v}(S_v) = 1$, each job $j_v$ is done at time $c(v) + 1$ and two jobs $j_{v}, j_{v'}$ only run in parallel if $c(v) = c(v')$.
Because no two adjacent vertices have the same color, $c(v) = c(v')$ implies $S_v \cap S_{v'} = \emptyset$.
Hence every machine runs at most one job at any given time, which shows that the schedule is feasible.
Its makespan is $k$ as the last job starts at time $k-1$.
  
Now assume there is a schedule with makespan $C$. We show there is a coloring with at most $\lfloor C \rfloor$ colors.
We can assume that $C \leq |V|$, as otherwise the trivial coloring suffices.
Hence, for each $v \in V$, the subset of machines that job $j_v$ is assigned to contains the set $S_v$.
Define the color of vertex $v \in V$ by $c(v) := \lfloor C_{j_v} \rfloor$, i.e., the completion time of the corresponding job rounded down.
Note that $c(v) \in \{1, \dots, \lfloor C \rfloor\}$, because each job has a processing time of at least $1$ and the last job finishes at time $C$.
Furthermore, if $c(v) = c(v')$ for some $v, v' \in V$, then there is a time $t$ where $j_v$ and $j_{v'}$ are both being processed in the schedule. Since each machine in $S_v$ is assigned to job $j_v$ and each machine in $S_{v'}$ is assigned to job $j_{v'}$, we conclude that $S_v \cap S_{v'} = \emptyset$, i.e., $v$ and $v'$ are not adjacent.
Hence the coloring is feasible.

By the above analysis, we conclude that any poly\-nomial-time $|\J|^{1-\epsilon}$-approximation algorithm for the generalized malleable scheduling problem with supermodular processing time functions would imply a $|V|^{1-\epsilon}$-approximation algorithm for \textsc{graph coloring}, which is a contradiction, unless $P = NP$.
\end{proof}
\section*{Conclusion}
In this work, we propose and study a generalization of the malleable scheduling problem in the setting of non-identical machines. For this problem, we design constant approximation algorithms for the cases of unrelated, uniform and restricted identical machines. 
Although our generalized model widens the amount of applications captured comparing to the case of malleable scheduling of identical machines, it does not yet capture issues of interdependence between the machines in an explicit manner. As an example, consider the case where a set of machines performs better in combination due to locality or other aspects.

In this direction, an interesting future work can be the study of the malleable scheduling problem with more general processing time functions that are able to capture the particularities of real-life resource allocation systems.

\bibliographystyle{plainurl}
\bibliography{ref}

\newpage
\appendix
\section{Appendix}

\subsection{A slight improvement by optimizing the threshold}\label{appendix:unrelated:tuning}

Recall that in both algorithms for the unrelated machines case, the threshold for deciding whether a job $j$ is assigned to $p(j)$ or to $T(j)$ is set to $1/2$. While this is the optimal choice for the simple $4$-approximate rounding scheme, we can achieve a slightly better bound for our improved algorithm by optimizing this threshold accordingly. 
Let $\beta \in (0,1)$ be the threshold such that a job $j$ is assigned to $p(j)$ when $x_{p(j),j} \geq \beta$, or to the set $T(j)$, when $\sum_{i \in T(j)} x_{i,j} > 1 - \beta$. 
Formally, let $\J^{(1)} := \{j \in \J \suchthat x_{p(j),j} \geq \beta\}$ be the set of jobs that are assigned to their parent-machines and $\J^{(2)} := \J \setminus \J^{(1)}$ the rest of the jobs.
For $j \in \J^{(2)}$ and $\theta \in [0, 1]$ define $S_j(\theta) := \{ i \in T(j) \suchthat 1 - \frac{\ell_i}{C} \geq \theta\}$. Choose $\theta_j$ so as to minimize $2(1 - \theta_j)C + f_j(S_j(\theta_j))$ (note that this minimizer can be determined by trying out at most $|T(j)|$ different values for $\theta_j$). 
We then assign each job in $j \in J^{(2)}$ to the machine set $S_j(\theta_j)$.

Recall that for any $i \in \M$ there is at most one $j \in J^{(2)}$ with $i \in T(j)$. If $i \notin S_j(\theta_j)$, then load of machine $i$ is bounded by $\frac{1}{\beta} \ell_i \leq \frac{1}{\beta} C$, where $\ell_i$ as defined in Section~\ref{sec:rounding:simple}. If $i \in S_j(\theta_j)$, then the load of machine $i$ is bounded by
\begin{ceqn} 
\begin{align}
\max_{i' \in S_j(\theta_j)}\Big\{\beta^{-1} \ell_{i} + f_j(S_j(\theta_j)) \Big\} \leq \beta^{-1}(1 - \theta_j)C + f_j(S_j(\theta_j)),
\end{align}
\end{ceqn}
where the inequality comes from the fact that $1 - \frac{\ell_{i'}}{C} \geq \theta_j$ for all $i' \in S_{\theta_j}$. 

We now fix $\beta$ to be the unique solution of $\beta^{-1}+1 = \frac{e^{\frac{1}{\beta} - 1}}{\beta(e^{\frac{1}{\beta} - 1} - 1)}$ in the interval $[0,1]$. The following proposition gives an upper bound on the RHS of \eqref{speed:load} as a result of our filtering technique.

\begin{proposition}
For each $j \in \J^{(2)}$, there is a $\theta \in [0,1]$ with $\beta^{-1}(1- \theta)C + f_j(S_j(\theta)) \leq \frac{e^{\frac{1}{\beta} - 1}}{\beta(e^{\frac{1}{\beta} - 1} - 1)}C$.
\end{proposition}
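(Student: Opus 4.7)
The plan is to generalize the proof of Proposition~\ref{speed:filtering} by carrying the threshold $\beta$ symbolically throughout, in place of the fixed value $1/2$ used there, and then letting the algebra at the end pin down the optimal constant. Set $\alpha := \frac{e^{1/\beta - 1}}{\beta(e^{1/\beta - 1} - 1)}$ and define $K(\theta) := \alpha - \beta^{-1}(1 - \theta)$, so that the target inequality is equivalent to $f_j(S_j(\theta)) \leq K(\theta)\, C$. By Fact~\ref{fact:technical}, it suffices to exhibit $\theta \in [0,1]$ with $\sigma_j(S_j(\theta)) \geq \frac{\gamma_j f_j(\gamma_j)}{K(\theta)\, C}$, so I proceed by contradiction and assume that $g(\theta) := \sigma_j(S_j(\theta)) < \frac{\gamma_j f_j(\gamma_j)}{K(\theta)\, C}$ for all $\theta \in [0,1]$. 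For this setup to make sense we need $K(\theta) > 0$ on $[0,1]$, which reduces to $\alpha \beta > 1$ and is immediate from $e^{1/\beta - 1} > 1$ for $\beta \in (0,1)$.

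Next, the LP-based derivation from the proof of Proposition~\ref{speed:filtering} carries over verbatim: constraints \eqref{lp-makespan} together with $\frac{\gamma_j f_j(\gamma_j)}{s_{i,j}} \leq f_j(s_{i,j})$ for $i$ with $j \in J^+_i$ (and the reverse sign trivially for $j \in J^-_i$) give $\frac{\gamma_j f_j(\gamma_j)}{C} x_{i,j} \leq s_{i,j}(1 - \ell_i/C)$ for every $i \in T(j)$. Summing over $T(j)$ and using $\int_0^1 g(\theta)\,d\theta = \sum_{i \in T(j)} s_{i,j}(1 - \ell_i/C)$, one obtains $\frac{\gamma_j f_j(\gamma_j)}{C} \sum_{i \in T(j)} x_{i,j} \leq \int_0^1 g(\theta)\,d\theta$. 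The only change from the original proof is that $\sum_{i \in T(j)} x_{i,j} > 1 - \beta$ rather than $1/2$, because $j \in \J^{(2)}$ now means $x_{p(j),j} < \beta$. Combining this lower bound with the contradiction hypothesis yields $1 - \beta < \int_0^1 \frac{d\theta}{K(\theta)}$.

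Finally, evaluate the right-hand integral via the substitution $\lambda = K(\theta)$, $d\lambda = \beta^{-1}\,d\theta$, whose limits are $\alpha - \beta^{-1}$ and $\alpha$, obtaining $1 - \beta < \beta \ln\!\left(\frac{\alpha}{\alpha - \beta^{-1}}\right) = \beta \ln\!\left(\frac{\alpha\beta}{\alpha\beta - 1}\right)$. Solving the equation $\frac{1 - \beta}{\beta} = \ln\!\left(\frac{\alpha\beta}{\alpha\beta - 1}\right)$ for $\alpha$ gives $\alpha\beta = \frac{e^{1/\beta - 1}}{e^{1/\beta - 1} - 1}$, i.e.\ exactly the constant in the statement, so with this value of $\alpha$ the inequality becomes an equality and we obtain the desired contradiction. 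The only genuinely delicate point in the whole argument is the bookkeeping: one must verify $\alpha\beta > 1$ so that $K(\theta)$ stays bounded away from zero on $[0,1]$ and the substitution is valid; everything else is a line-for-line adaptation of Proposition~\ref{speed:filtering} with $1/2$ replaced by $1 - \beta$.
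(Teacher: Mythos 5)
Your proof is correct and follows essentially the same route as the paper's: the same contradiction hypothesis on $g(\theta) = \sigma_j(S_j(\theta))$, the same integral identity, the same use of the LP constraints with $\sum_{i \in T(j)} x_{i,j} > 1-\beta$ in place of $1/2$, and the same substitution yielding $\frac{1-\beta}{\beta} < \ln\bigl(\frac{\alpha}{\alpha - \beta^{-1}}\bigr) = \frac{1-\beta}{\beta}$. Your explicit check that $\alpha\beta > 1$ (so that the denominator $\alpha + \beta^{-1}\theta - \beta^{-1}$ stays positive on $[0,1]$) is a small point of added care that the paper leaves implicit.
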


\begin{proof}
We first assume that for all $i \in T(j)$, it is the case that $s_{i,j} \leq \gamma_j$ and, thus, $r_{i,j} = \gamma_j$. In the opposite case, where there exists some $i' \in T(j)$ such that $s_{i,j} > \gamma_j$, by choosing $\theta = 0$, then \eqref{speed:load} can be upper bounded by $\beta^{-1}C + f_j(S_j(0)) = \beta^{-1}C + f_j(T(j)) \leq \left(\beta^{-1}+1\right)C$. In that case, the proposition follows directly by the fact that $\beta^{-1}+1 = \frac{e^{\frac{1}{\beta} - 1}}{\beta(e^{\frac{1}{\beta} - 1} - 1)}$, by our choice of $\beta$.

Define $\alpha := \frac{e^{\frac{1}{\beta} - 1}}{\beta(e^{\frac{1}{\beta} - 1} - 1)}$. We show that there is a $\theta \in [0, 1]$ with $\sigma_j(S_j(\theta)) \geq \frac{\gamma_j f_j(\gamma_j)}{(\alpha + \beta^{-1} \theta -\beta^{-1})C}$. Then $f_j(S_j(\theta)) \leq (\alpha + \beta^{-1} \theta -\beta^{-1}) C$ by Fact~\ref{fact:technical}, implying the lemma.

Define the function $g: [0,1] \rightarrow \mathbb{R}_{+}$ by $g(\theta) := \sigma_j (S_j(\theta))$. It is easy to see $g$ is non-increasing integrable and that 
 $$\int_{0}^{1} g(\theta) d\theta = \sum_{i \in T(j)} s_{i,j} (1 - \frac{\ell_i}{C}).$$ 
 
Now assume by contradiction that $g(\theta) < \frac{\gamma_j f_j(\gamma_j)}{(\alpha + \beta^{-1} \theta - \beta^{-1})C}$ for all $\theta \in [0, 1]$.
Note that $\ell_i + \frac{\gamma_j f_j(\gamma_j)}{s_{i,j}} x_{i,j} \leq C$ for every $i \in T(j)$ by constraints \eqref{lp-makespan}.
Hence  $\frac{f_j(\gamma_j)\gamma_j}{C} x_{i,j} \leq s_{i,j}(1 - \frac{\ell_i}{C})$ for all $i \in T(j)$.
Summing over all $i \in T(j)$ and using the fact that $\sum_{i \in T(j)} x_{i,j} \geq 1 - \beta$ because $j \in J^{(2)}$, we get
\begin{ceqn}
\begin{align*}
(1- \beta)\frac{f_j(\gamma_j)\gamma_j}{C} &\leq \sum_{i \in T(j)} s_{i,j} (1 - \frac{\ell_i}{C})\\ 
&= \int_{0}^{1} g(\theta) d\theta \\
&< \frac{f_j(\gamma_j)\gamma_j}{C} \int_{0}^{1} \frac{1}{\alpha + \beta^{-1} \theta - \beta^{-1}} d\theta,
\end{align*}
\end{ceqn}
where the last inequality uses the assumption that $g(\theta) < \frac{\gamma_j f_j(\gamma_j)}{(\alpha + \beta^{-1} \theta_j - \beta^{-1})C}$ for all $\theta \in [0,1]$. By simplifying the above inequality, we get the contradiction
\begin{ceqn}
\[\frac{1 - \beta}{\beta} < \int_{\alpha-\beta^{-1}}^{\alpha} \frac{1}{\lambda} d\lambda = \ln(\frac{\alpha}{\alpha -\beta^{-1}}) = \frac{1 - \beta}{\beta}.\]
\end{ceqn}
\end{proof}
Therefore, by choosing $\alpha = \inf_{\beta \in (0,1)} \{ \frac{e^{\frac{1}{\beta} - 1}}{\beta(e^{\frac{1}{\beta} - 1} - 1)} \} \approx 3.14619$ \footnote{Note that the minimizer of this expression coincides with the unique solution of $\beta^{-1}+1 = \frac{e^{\frac{1}{\beta} - 1}}{\beta(e^{\frac{1}{\beta} - 1} - 1)}$ in the interval $[0,1]$.}, with threshold $\beta \approx 0.465941$, we can prove the following theorem.
\begin{theorem}
There exists a polynomial-time $3.1461$-approximation algorithm for the problem of scheduling malleable jobs on unrelated machines.
\end{theorem}

\end{document}